\tikzstyle{label}=[shape=circle,draw,inner sep=0pt,minimum size=5mm]
\tikzstyle{tran}=[draw,->,>=stealth, rounded corners]
\DeclareMathAlphabet{\mathpzc}{OT1}{pzc}{m}{it}
\newcommand{\cone}{cone}
\DeclareMathOperator{\eff}{\mathit{eff}}
\newcommand{\plen}{\text{L}}
\newcommand{\term}{\mathcal{L}}
\newcommand{\conf}{\mathit{C}}
\newcommand{\Inc}{\mathit{Inc}}
\newcommand{\A}{\ensuremath{\mathcal{A}}}
\newcommand{\calH}{\ensuremath{\mathcal{H}}}
\newcommand{\N}{\ensuremath{\mathbb{N}}}
\newcommand{\R}{\ensuremath{\mathbb{R}}}
\newcommand{\Q}{\ensuremath{\mathbb{Q}}}
\newcommand{\bigO}{\ensuremath{\mathcal{O}}}
\newcommand{\ce}[1]{\ensuremath{\left(#1 \right)}}
\newcommand{\size}[1]{|\!|#1|\!|}
\newcommand{\bx}{\ensuremath{\bold{x}}}
\newcommand{\by}{\ensuremath{\bold{y}}}
\newcommand{\bz}{\ensuremath{\bold{z}}}
\newcommand{\bu}{\ensuremath{\bold{u}}}
\newcommand{\bv}{\ensuremath{\bold{v}}}
\newcommand{\bw}{\ensuremath{\bold{w}}}
\newcommand{\bn}{\ensuremath{\bold{n}}}
\newcommand{\bc}{\ensuremath{\bold{c}}}
\newcommand{\EXPSPACE}{\textsf{EXPSPACE}}
\newcommand{\Effects}{\mathit{Inc}}
\newcommand{\VASS}{\A}
\newcommand{\Normals}{\mathit{Normals}}
\newcommand{\Norm}{\mathit{norm}}
\newcommand{\Decomp}{\mathit{Decomp}}
\newcommand{\Rset}{\mathbb{R}}
\date{}
\newtheorem{theorem}{Theorem}[section]
\newtheorem{lemma}[theorem]{Lemma}
\newtheorem{remark}[theorem]{Remark}
\newtheorem{definition}[theorem]{Definition}
\newtheorem{example}[theorem]{Example}
\title{\bf Efficient Algorithms for Checking Fast\\ Termination in VASS}
\author{Tom\'{a}\v{s} Br\'{a}zdil\\
    \small Faculty of Informatics, Masaryk University\\
    \texttt{\small brazdil@fi.muni.cz} \and
    Krishnendu Chatterjee\\
    \small IST Austria\\
    \texttt{\small krish.chat@gmail.com} \and
    Anton\'{\i}n Ku\v{c}era\\
    \small Faculty of Informatics, Masaryk University\\
    \texttt{\small kucera@fi.muni.cz} \and
    Petr Novotn\'{y}\\
    \small IST Austria\\
    \texttt{\small petr.novotny@ist.ac.at} \and
    Dominik Velan\\
    \small Faculty of Informatics, Masaryk University\\
     \texttt{\small xvelan1@fi.muni.cz}}
\begin{document}

\maketitle

\begin{abstract}
Vector Addition Systems with States (VASS) consists of a finite state space 
equipped with $d$ counters ($d$ is called the dimension), where in 
each transition every counter is incremented, decremented, or left unchanged.
VASS provide a fundamental model for analysis of concurrent processes, 
parametrized systems, and they are also used as abstract models for programs for bounds 
analysis. 
While termination is the basic liveness property that asks the qualitative question of
whether a given model always terminates or not, the more general quantitative question 
asks for bounds on the number of steps to termination.
In the realm of quantitative bounds a fundamental problem is to obtain asymptotic bounds
on termination time. 
Large asymptotic bounds such as exponential or higher already suggest that either 
there is some error in modeling, or the model is not useful in practice. 
Hence we focus on polynomial asymptotic bounds for VASS.
While some well-known approaches (e.g., lexicographic ranking functions) are neither 
sound nor complete with respect to polynomial bounds, other approaches only present
sound methods for upper bounds.
The existing approaches neither provide complete methods nor provide analysis of 
precise complexity bounds. 
In this work our main contributions are as follows:
First, for linear asymptotic bounds we present a sound and complete method for VASS, 
and moreover, our algorithm runs in polynomial time.
Second, we classify VASS according the normals of the vectors of the cycles.
We show that singularities in the normal are the key reason for asymptotic 
bounds such as exponential (even in three dimensions) and 
non-elementary (even in four dimensions) for VASS.
In absence of singularities, we show that the asymptotic complexity bound 
is always polynomial and of the form $\Theta(n^k)$, for some integer $k \leq d$.
We present an algorithm, with time complexity polynomial in the size of the VASS 
and exponential in dimension $d$, to compute the optimal $k$.
In other words, in absence of singularities, we present an efficient 
sound and complete method to obtain precise (not only upper, but matching 
upper and lower) asymptotic complexity bounds for VASS.
\end{abstract}

\section{Introduction}
\label{sec-intro}

\noindent{\em Static analysis for quantitative bounds.} 
Static analysis of programs reasons about programs without running them.
The most basic and important problem about \emph{liveness} properties studied 
in program analysis 
is the {\em termination} problem that given a program asks whether it 
always terminates.
The above problem seeks a {\em qualitative} or Boolean answer.  
However, given the recent interest in analysis of resource-constrained systems,
such as embedded systems, as well as for performance analysis,
it is vital to obtain quantitative performance characteristics.
In contrast to the qualitative termination, the quantitative termination 
problem asks to obtain bounds on the number of
steps to termination.
The quantitative problem, which is more challenging than the qualitative one, 
is of great interest in program analysis in various
domains, e.g.,
(a)~in applications domains such as hard real-time systems, worst-case 
guarantees are required; and (b)~the bounds are useful in early detection 
of egregious performance problems in large code bases~\cite{SPEED1}.

\smallskip\noindent{\em Approaches for quantitative bounds.} 
Given the importance of the quantitative termination problem 
significant research effort has been devoted, including important 
projects such as SPEED, COSTA~\cite{SPEED1,SPEED2,DBLP:journals/entcs/AlbertAGGPRRZ09}.
Some prominent approaches are the following:
\begin{itemize}
\item The worst-case execution time (WCET) analysis is an active field 
of research on its own (with primary focus on sequential loop-free code and hardware 
aspects)~\cite{DBLP:journals/tecs/WilhelmEEHTWBFHMMPPSS08}.

\item Advanced program-analysis techniques have also been developed for
asymptotic bounds, such as resource analysis using abstract interpretation and 
type systems~\cite{SPEED2,DBLP:journals/entcs/AlbertAGGPRRZ09,DBLP:conf/popl/JostHLH10,Hoffman1,Hoffman2},
e.g., linear invariant generation to obtain disjunctive and non-linear upper 
bounds~\cite{DBLP:conf/cav/ColonSS03}, or 
potential-based methods~\cite{Hoffman1,Hoffman2}.

\item Ranking functions based approach provides sound and complete approach 
for the qualitative termination problem, and for the quantitative problem it 
provides a sound approach to obtain asymptotic upper 
bounds~\cite{BG05,DBLP:conf/cav/BradleyMS05,DBLP:conf/tacas/ColonS01,DBLP:conf/vmcai/PodelskiR04,DBLP:conf/pods/SohnG91,DBLP:conf/vmcai/Cousot05,DBLP:journals/fcsc/YangZZX10,DBLP:journals/jossac/ShenWYZ13}.
\end{itemize}
In summary, the WCET approach does not consider asymptotic bounds, while the other approaches consider asymptotic bounds,
and present sound but not complete methods for upper bounds.

\smallskip\noindent{\em VASS and their modeling power.}
Vector Addition Systems (VASs)~\cite{KM69} or equivalently Petri Nets 
are fundamental models for analysis of parallel processes~\cite{EN94}.
Enriching VASs with an underlying finite-state transition structure gives 
rise to Vector Addition Systems with States (VASS).
Intuitively, a VASS consists of a finite set of control states and transitions
between the control states, and and a set of $d$ counters that hold non-negative 
integer values, where at every transition between the control states each counter 
is either incremented or decremented.
VASS are a fundamental model for concurrent processes~\cite{EN94}, and thus are 
often used for performing analysis of such 
processes~\cite{DKO13:conc-verification-vass,GM12:asynchronous-verification-TOPLAS,KKW10:dynamic-cutoff-detection,KKW12:coverability-proof-minim}.
Besides that, VASS have been used as models of parametrized 
systems~\cite{Bloem16},
as abstract models for programs for bounds and amortized analysis~\cite{SZV14}, 
as well as models of interactions between components of an API in 
component-based 
synthesis~\cite{FMWDR17:component-based-synthesis}.
Thus VASS provide a rich modeling framework for a wide class of problems
in program analysis.

\smallskip\noindent{\em Previous results for VASS.}
For a VASS, a {\em configuration} is a control state along with the values of 
counters. 
The termination problem for VASS can be defined as follows:
(a)~{\em counter termination} where the VASS terminates when one of the counters 
reaches value~0; 
(b)~{\em control-state termination} where given a set of terminating control 
states
the VASS terminates when one of the terminating states is reached.
The termination question for VASS, given an initial configuration, asks whether 
all paths from the configuration terminate.
The counter-termination problem is known to be \EXPSPACE-complete: the 
\EXPSPACE-hardness is shown in~\cite{Lipton:PN-Reachability,Esparza:PN} 
and the upper bound follows from~\cite{Yen92:Petri-Net-logic,AH11:Yen,FLLS11:EnGames}.

\smallskip\noindent{\em Asymptotic bounds analysis for VASS.}
While the qualitative termination problem has been studied extensively for VASS,
the problem of quantitative bounds for the termination problem has received much 
less attention. 
In general, even for VASS whose termination can be guaranteed, the number of 
steps
required to terminate can be non-elementary (tower of exponentials) in the 
magnitude of the initial configuration (i.e. in the maximal counter value 
appearing in the configuration). 
For practical purposes, bounds such as non-elementary or even exponential are too 
high as asymptotic complexity bounds, and the relevant complexity bounds are the 
polynomial ones. 
In this work we study the problem of computing asymptotic bounds for VASS, 
focusing on polynomial asymptotic bounds. 
Given a VASS and a configuration $c$, let $n_c$ denote the maximum value of the 
counters in $c$. 
If for all configurations $c$ all paths starting from $c$ terminate, then let 
$T_c$ 
denote the worst-case termination time from configuration $c$ 
(i.e., the maximum number of steps till termination among all paths starting from $c$).
The quantitative termination problem with {\em polynomial asymptotic bound} given a VASS 
and an integer $k$ asks whether the asymptotic worst-case termination time is at 
most a polynomial of degree $k$, i.e., whether there exists a constant $\alpha$ such that for 
all $c$ we have $T_c \leq \alpha \cdot n_c^k$. 
Note that with $k=1$ (resp., $k=2,3$) the problem asks for asymptotic linear 
(resp., quadratic, cubic) bounds on the worst-case termination time. 
The asymptotic bound problem is rather different from the qualitative 
termination problem for VASS, and even the decidability of this problem is not obvious.

\smallskip\noindent{\em Limitations of the previous approaches for polynomial bounds for VASS.}
In the analysis of asymptotic bounds there are three key aspects, namely,
(a)~soundness, (b)~completeness, and (c)~precise (or tight complexity) bounds. 
For asymptotic bounds, previous approaches (such as ranking functions, potential-based methods etc) 
are sound (but not complete) for upper bounds.
In other words, if the approaches obtain linear, or quadratic, or cubic bounds, 
then such bounds are guaranteed as asymptotic upper bounds (i.e., soundness is guaranteed),
however, even if the asymptotic bound is linear or quadratic, the approaches may 
fail to obtain any asymptotic upper bound (i.e., completeness is not guaranteed).
Another approach that has been considered for complexity analysis of programs 
are {\em lexicographic} ranking 
functions~\cite{ADFG10:lexicographic-ranking-flowcharts}.
We show that with respect to polynomial bounds lexicographic ranking functions 
are not sound, i.e., there exists VASS  for which lexicographic ranking 
function exists but the asymptotic complexity is exponential (see 
Example~\ref{ex:lex}).
Finally, none of the existing approaches are applicable for tight complexity bounds, 
i.e., the approaches consider $O(\cdot)$ bounds and are not applicable 
for $\Theta(\cdot)$ bounds.
In summary, previous approaches do not provide sound and complete method for polynomial 
asymptotic complexity of VASS; and no approach provide techniques for precise complexity analysis.

\smallskip\noindent{\em Our contributions.} 
Our main contributions are related to the complexity of the quantitative termination 
with polynomial asymptotic bounds for VASS and our results are applicable 
to counter termination.

\begin{compactenum}

\item We start with the important special case of linear asymptotic bounds. 
We present the first sound and complete algorithm that can decide linear asymptotic 
bounds for all VASS.
Moreover, our algorithm is an efficient one that has polynomial time complexity. This contrast sharply with \EXPSPACE-hardness of the qualitative termination problem and shows that deciding \emph{fast (linear) termination}, which seems even more relevant for practical purposes, is computationally easier than deciding qualitative termination.

\item Next, we turn our attention to polynomial asymptotic bounds. For simplicity, we restrict ourselves to VASS where the underlying finite-state transition structure is strongly connected (see Section~\ref{sec-concl} for more comments). Given such a VASS $\VASS$, for every short\footnote{A cycle $C$ is short if its length is bounded by the number of control states of a given VASS.} 
cycle $C$ of the $\VASS$, the effect of 
executing the short cycle once can be represented as a $d$-dimensional vector, an analogue of loop
summary (ignoring any nested sub-loops) for classical programs. Let $\Inc$ 
denote the set of all \emph{increments}, i.e., short cycle effects 
in $\VASS$. We investigate the geometric properties of $\Effects$ to derive 
complexity bounds on $\VASS$. The property playing a key role is whether all 
cycle effects in $\Effects$ lie on one side of some hyperplane in $\Rset^{d}$. 
Formally, each hyperplane is uniquely determined by its normal vector 
$\bn$ (a vector perpendicular to the hyperplane), and a hyperplane defined by 
$\bn$ \emph{covers} a vector 
effects $\bv$ if $\bv \cdot \bn \leq 0$, where ``$\cdot$'' is the dot product 
of vectors. Geometrically, the hyperplane defined by $\bn$ splits the whole 
$d$-dimensional space into two halves such that the normal $\bn$ points into 
one of the halves, and its negative $-\bn$ points into the other half. The 
hyperplane then ``covers'' vector 
$\bv$ if $\bv$ points into the same half as the vector $-\bn$.
We denote by $\Normals(\VASS)$ the set of all normals such that each 
$\bn\in\Normals(\VASS)$ covers all cycle effects in $\VASS$.
Depending on the properties of $\Normals(\VASS)$, we can distinguish the following cases:

\begin{itemize}
\item[(A)] {\em No normal:} if $\Normals(\VASS)=\emptyset$ 
(Fig.~\ref{fig:geom-A});
\item[(B)] {\em Negative normal:} if all $\bn\in\Normals(\VASS)$ have a 
negative component (Fig.~\ref{fig:geom-B});
\item[(C)] {\em Positive normal:} if there exists 
$\bn\in\Normals(\VASS)$ whose all components are positive 
(Fig.~\ref{fig:geom-C});
\item[(D)] {\em Singular normal:} if (C) does not hold, but there exists 
$\bn\in\Normals(\VASS)$ such that all components of $\bn$ are non-negative 
(in which some component of $\bn$ is zero, Fig.~\ref{fig:geom-D});
\end{itemize}
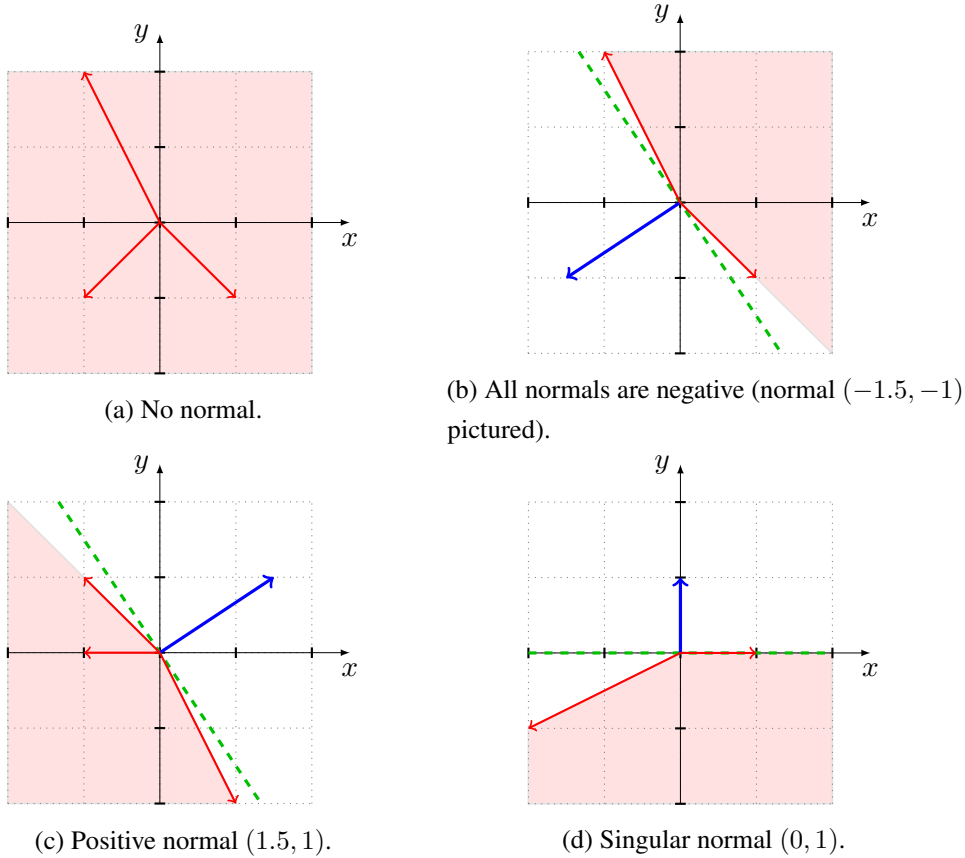
\begin{figure}[t]
		\centering
\begin{subfigure}{0.45\textwidth}
	\centering
\begin{tikzpicture}
	   \tkzInit[xmax=2,ymax=2,xmin=-2,ymin=-2]
   \tkzDefPoint(2,2){A}
   \tkzDefPoint(2,-2){B}
   \tkzDefPoint(-2,-2){C}
   \tkzDefPoint(-2,2){D}
   \tkzDrawPolygon[fill=red,opacity=.12](A,B,C,D)
   \begin{scope}[dotted]
   	\tkzGrid
   \end{scope}
   \begin{scope}[very thin]
   	\tkzDrawX
   	\tkzDrawY
   \end{scope}
   \draw[ thick,->, color = red] (0,0) -- (1,-1); 
   \draw[ thick,->, color = red] (0,0) -- (-1,2); 
   \draw[ thick,->, color = red] (0,0) -- (-1,-1); 
\end{tikzpicture}
\caption{No normal.}
\label{fig:geom-A}
\end{subfigure}
\begin{subfigure}{0.45\textwidth}
	\centering
\begin{tikzpicture}
\tkzInit[xmax=2,ymax=2,xmin=-2,ymin=-2]
\tkzDefPoint(2,2){A}
\tkzDefPoint(2,-2){B}
\tkzDefPoint(0,0){C}
\tkzDefPoint(-1,2){D}
\tkzDrawPolygon[fill=red,opacity=.12](A,B,C,D)
\begin{scope}[dotted]
\tkzGrid
\end{scope}
\begin{scope}[very thin]
\tkzDrawX
\tkzDrawY
\end{scope}
\draw[ very thick, dashed, color = green!75!black] (-4/3,2) -- (4/3,-2);
\draw[ very thick,->, color = blue] (0,0) -- (-1.5,-1);
\draw[  thick,->, color = red] (0,0) -- (1,-1); 
\draw[  thick,->, color = red] (0,0) -- (-1,2); 
\end{tikzpicture}
\caption{All normals are negative (normal $(-1.5,-1)$ pictured).}
\label{fig:geom-B}
\end{subfigure}

\begin{subfigure}{0.45\textwidth}
	\centering
\begin{tikzpicture}
\tkzInit[xmax=2,ymax=2,xmin=-2,ymin=-2]
\tkzDefPoint(-2,2){A}
\tkzDefPoint(0,0){B}
\tkzDefPoint(1,-2){C}
\tkzDefPoint(-2,-2){D}
\tkzDrawPolygon[fill=red,opacity=.12](A,B,C,D)
\begin{scope}[dotted]
\tkzGrid
\end{scope}
\begin{scope}[very thin]
\tkzDrawX
\tkzDrawY
\end{scope}
\draw[ very thick, dashed, color = green!75!black] (-4/3,2) -- (4/3,-2); 
\draw[ very thick,->, color = blue] (0,0) -- (1.5,1); 
\draw[  thick,->, color = red] (0,0) -- (-1,0); 
\draw[  thick,->, color = red] (0,0) -- (-1,1); 
\draw[  thick,->, color = red] (0,0) -- (1,-2); 
\end{tikzpicture}
\caption{Positive normal $(1.5,1)$.}
\label{fig:geom-C}
\end{subfigure}
\begin{subfigure}{0.45\textwidth}
	\centering
\begin{tikzpicture}
\tkzInit[xmax=2,ymax=2,xmin=-2,ymin=-2]
\tkzDefPoint(0,0){A}
\tkzDefPoint(2,0){B}
\tkzDefPoint(2,-2){C}
\tkzDefPoint(-2,-2){D}
\tkzDefPoint(-2,-1){E}
\tkzDrawPolygon[fill=red,opacity=.12](A,B,C,D,E)
\begin{scope}[dotted]
\tkzGrid
\end{scope}
\begin{scope}[very thin]
\tkzDrawX
\tkzDrawY
\end{scope}
\draw[ very thick, dashed, color = green!75!black] (-2,0) -- (2,0); 
\draw[  very thick,->, color = blue] (0,0) -- (0,1); 
\draw[  thick,->, color = red] (0,0) -- (1,0); 
\draw[  thick,->, color = red] (0,0) -- (-2,-1); 
\end{tikzpicture}
\caption{Singular normal $(0,1)$.}
\label{fig:geom-D}
\end{subfigure}
\caption{Classification of VASS into 4 sub-classes according to the geometric 
properties of vectors of cycle effects, pictured on 2D examples. Each figure 
pictures (as red arrows) 
vectors of simple cycle effects in some VASS (it is easy, for each 
figure, to construct a VASS whose simple cycle effects are exactly those 
pictured). The green dashed line, if present, represents the hyperplane (in 2D 
it is a line) 
covering the set of cycle effects. The thick blue arrow represents the normal 
defining the covering hyperplane. The pink shaded area represents the 
\emph{cone} generated by cycle effects (see Section~\ref{sec-VASS-term}). Intuitively, we 
seek hyperplanes that do not intersect the interior of the cone (but can touch 
its boundary).  }
\label{fig:geom-classification}
\end{figure}
First, we observe that given a VASS, we can decide to which of 
the above category it belongs, in time which is polynomial in the number of control states of a given VASS for every fixed dimension (i.e., the algorithm is exponential only in the dimension~$d$; see Section~\ref{sec-VASS-def} for more comments). Second, we also show that if a VASS belongs to one of the first two categories, 
then there exist configurations with non-terminating runs from them (see 
Theorem~\ref{thm-condAB}).
Hence asymptotic bounds are not applicable for the first two categories and 
we focus on the last two categories for polynomial asymptotic bounds.

\item For the positive normal category (C) we show that either there exist non-terminating 
runs or else the worst-case termination time is of the form $\Theta(n^k)$, where $k$ is an integer and $k\leq d$.
We show that given a VASS in this category, we can first decide whether all runs are terminating, and if yes, then we can compute the optimal asymptotic polynomial degree $k$ such that the worst-case termination time is $\Theta(n^k)$ (see Theorem~\ref{thm-poly-poly}). Again, this is achievable in time polynomial in the number of control states of a given VASS for every fixed dimension. 
In other words, for this class of VASS we present an efficient approach that is
sound, complete, and obtains precise polynomial complexity bounds.
To the best of our knowledge, no previous work presents a complete approach for
asymptotic complexity bounds for VASS, and the existing techniques only consider 
$O(\cdot)$ bounds, and not precise $\Theta(\cdot)$ bounds.

\item We show that singularities in the normal are the key reason for complex
asymptotic bounds in VASS. 
More precisely, for VASS falling into the singular normal category (D), in 
general 
the asymptotic bounds
are not polynomial, and we show that 
(a)~by slightly adapting the results of \cite{MayrMeyer:PN-containment}, it follows that termination complexity of a VASS $\A$ in category~(D) cannot be bounded by any primitive recursive function in the size of~$\A$;
(b)~even with three dimensions, the asymptotic bound is exponential in general (see Example~\ref{ex:exponential}), 
(c)~even with four dimensions, the asymptotic bound is non-elementary in general (see Example~\ref{ex:nonelem}).

\end{compactenum}

The main \emph{technical contribution} of this paper is a novel geometric approach, based on hyperplane separation techniques, for asymptotic time complexity analysis of VASS. Our methods are sound for arbitrary VASS and complete for a non-trivial subclass.

\section{Preliminaries}
\label{sec-prelim}

\subsection{Basic Notation}

We use $\N$, $\Q$, and $\R$ to denote the sets of non-negative integers, rational numbers, and real numbers. The subsets of all \emph{positive} elements of $\N$, $\Q$, and $\R$ are denoted by $\N^+$, $\Q^+$, and $\R^+$.  Further, we use $\N_\infty$ to denote the set $\N \cup \{\infty\}$ where $\infty$ is treated according to the standard conventions.
The cardinality of a given set $M$ is denoted by $|M|$. When no confusion arises, we also use $|c|$ to denote the absolute value of a given $c \in \R$. 

Given a function $f : \N \rightarrow \N$, we use  $\bigO(f(n))$ and $\Omega(f(n))$ to denote the sets of all $g : \N \rightarrow \N$ such that $g(n) \leq a \cdot f(n)$ and $g(n) \geq b \cdot f(n)$ for all sufficiently large $n \in \N$, where $a,b \in \R^+$ are some constants. If $h(n) \in \bigO(f(n))$ and $h(n) \in \Omega(f(n))$, we write $h(n) \in \Theta(f(n))$.

Let $d \geq 1$. The elements of $\R^d$ are denoted by bold letters such as $\bu,\bv,\bz,\ldots$. The $i$-th component of $\bv$ is denoted by $\bv(i)$, i.e., $\bv = (\bv(1),\ldots,\bv(d))$. For every $n \in \N$, we use $\vec{n}$ to denote the constant vector where all components are equal to~$n$.
The scalar product of $\bv,\bu \in \R^d$ is denoted by $\bv \cdot \bu$, i.e., 
$\bv\cdot \bu = \sum_{i=1}^d \bv(i)\cdot\bu(i)$. The other 
standard operations and relations on $\R$ such as 
$+$, $\leq$, or $<$ are extended to $\R^d$ in the component-wise way. In 
particular, $\bv$ is \emph{positive} if $\bv > \vec{0}$, i.e., all components of 
$\bv$ are positive. The norm of $\bv$ is defined by $\Norm(\bv) = \sqrt{\bv(1)^2 + \cdots+\bv(d)^2}$.

\paragraph{Half-spaces and Cones.}
An \emph{open half-space} of $\R^d$ determined by a normal vector $\bn \in 
\R^d$, where $\bn \neq \vec{0}$, is the set $\calH_{\bn}$ of all $\bx \in \R^d$ such that $\bx \cdot \bn < 0$. A \emph{closed half-space} $\hat{\calH}_{\bn}$ is defined in the same way 
but the above inequality is non-strict. Given a finite set of vectors $U 
\subseteq \R^d$, we use $\cone{}(U)$ to denote the set of all vectors of the form 
$\sum_{\bu \in U} c_{\bu} \bu$, where $c_{\bu}$ is a non-negative real constant 
for every $\bu \in U$.

\begin{example}
In Fig.~\ref{fig:geom-classification}, the cone, or more precisely its part 
that intersects the displayed area of $\Rset^2$,
generated by the cycle 
effects (i.e., by the ``red'' vectors) is the pink-shaded area. As for the half 
spaces, e.g., in Fig.~\ref{fig:geom-D}, the closed half-space defined by the 
normal 
vector $(0,1)$ is the set $\{(x,y)\mid y\leq 0\}$, while the open half-space 
determined by the same normal is the set $\{(x,y)\mid y< 0\}$. Intuitively, 
each normal vector $\bn$ determines a hyperplane (pictured by dashed lines in 
Fig.~\ref{fig:geom-classification}) that cuts $\Rset^d$ in two halves, and 
$\calH_{\bn}$ is the 
half which does not contain $\bn$: depending on whether we are interested in 
closed or open half-space, we include the separating hyperplane into 
$\calH_{\bn}$ or not, respectively.
\end{example}

\subsection{Syntax and semantics of VASS} 
\label{sec-VASS-def}

In this subsection we present a syntax of VASS, represented as finite state 
graphs with transitions labelled by vectors of counter changes.
 
\begin{definition}
\label{def-VASS}
Let $d \in \N^+$. A \emph{$d$-dimensional vector addition system with states (VASS)} is a pair $\A = \ce{Q,T}$, where $Q \neq \emptyset$ is a finite set of \emph{states} and  $T \subseteq Q \times \{-1,0,1\}^d \times Q$ is a set of \emph{transitions}. 
\end{definition}

\begin{example}
Fig.~\ref{fig:comp-ex} shows examples of three small 2-dimensional VASS. The VASS in Fig.~\ref{fig:comp-ex1} has two states $q_1,q_2$ and four  
transitions 
$(q_1,(-1,1),q_2)$, $(q_1,(0,0),q_2)$, $(q_2,(-1,0),q_1)$, $(q_2,(1,-1),q_2)$.
\end{example}

In some cases, we design algorithms where the time complexity is not polynomial in $\size{\A}$ (i.e., the size of $\A$), but polynomial in $|Q|$ and exponential just in~$d$. Then, we say that the running time is polynomial in $|Q|$ for a fixed~$d$.

\begin{figure}
\begin{subfigure}{0.45\textwidth}
\centering
	\begin{tikzpicture}[shorten >=1pt,node distance=2cm,on grid,auto]
	\node[state] (q_1)   {$q_1$};
	\node[state] (q_2) [right=of q_1] {$q_2$};
	\path[->]
	(q_1) edge [bend left]  node {(0,0)} (q_2)
	edge  [loop left] node {(-1,1)} ()
	(q_2) edge [bend left] node  {(-1,0)} (q_1)
	edge [loop right] node {(1,-1)} ();
	\end{tikzpicture}
\caption{Quadratic complexity.}
\label{fig:comp-ex1}
\end{subfigure}
\begin{subfigure}{0.45\textwidth}
\centering
	\begin{tikzpicture}[shorten >=1pt,node distance=2cm,on grid,auto]
	\node[state] (q_1)   {$q_1$};
	\node[state] (q_2) [right=of q_1] {$q_2$};
	\path[->]
	(q_1) edge [bend left]  node {(0,0)} (q_2)
	edge  [loop left] node {(-1,0)} ()
	(q_2) edge [bend left] node  {(-1,1)} (q_1)
	edge [loop right] node {(1,-1)} ();
	\end{tikzpicture}
\caption{Non-terminating VASS.}
\label{fig:comp-ex2}
\end{subfigure}
\begin{subfigure}{\textwidth}
	\centering
	\begin{tikzpicture}[shorten >=1pt,node distance=2cm,on grid,auto]
	\node[state] (q_1)   {$q_1$};
	\node[state] (q_2) [right=of q_1] {$q_2$};
	\path[->]
	(q_1) edge []  node {(0,0)} (q_2)
	edge  [loop left] node {(-1,1)} ()
	(q_2) edge [loop above] node  {(-1,0)} ()
	edge [loop right] node {(1,-1)} ();
	\end{tikzpicture}
\caption{Linear complexity.}
\label{fig:comp-ex3}
\end{subfigure}
\caption{An example of 2-dimensional VASS of varying complexity.}
\label{fig:comp-ex}
\end{figure}
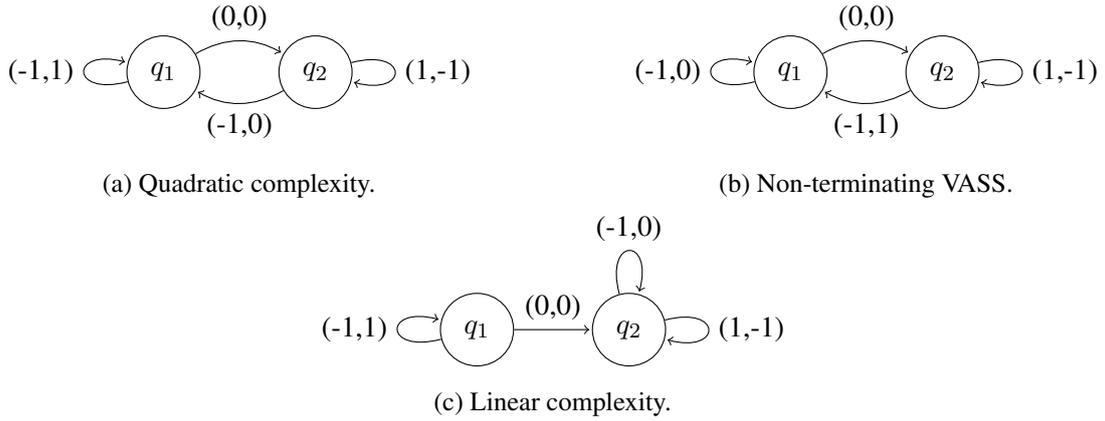

We use simple operational semantics for VASS based on the view of VASS as 
finite-state machines augmented with non-negative integer-valued counters.

A \emph{configuration} of $\A$ is a pair $p\bv$, where $p \in Q$ and $\bv \in 
\N^d$. The set of all configurations of $\A$ is denoted by $\conf(\A)$. The 
\emph{size} of  $p\bv \in \conf(\A)$ is defined as $\size{p\bv} = \max \{\bv(i) 
\mid 1\leq i \leq d\}$.

A \emph{finite path} in $\A$ of length~$n$ is a finite sequence $\pi$ of the form 
$p_0,\bu_1,p_1,\bu_2,p_2,\ldots,\bu_n,p_n$ where $n \geq 1$ and 
$(p_i,\bu_{i+1},p_{i+1}) \in T$ for all $0 \leq i < n$. If $p_0 = p_n$, then 
$\pi$ is a \emph{cycle}. A cycle is \emph{short} if $n \leq |Q|$. The \emph{effect} of $\pi$, denoted by $\eff(\pi)$, is the sum $\bu_1 + \cdots + \bu_n$. Given two finite paths $\alpha = p_0,\bu_1,\ldots,p_n$ and $\beta = q_0,\bv_1,\ldots,q_m$ such that $p_n = q_0$, we use $\alpha \odot \beta$ to denote the finite path $p_0,\bu_1,\ldots,p_n,\bv_1,\ldots,q_m$.

Let $\pi$ be a finite path in $\A$. A \emph{decomposition of $\pi$ into short\footnote{A standard technique for analysing paths in VASS are decompositions into \emph{simple} cycles, where all states except for $p_0$ and $p_n$ are pairwise different. The reason why we use short cycles instead of simple ones is clarified in Lemma~\ref{lem-inc-compute}.} cycles}, denoted by $\Decomp(\pi)$, is a finite list of short cycles (repetitions allowed) defined recursively as follows:
\begin{itemize}
	\item If $\pi$ does not contain any short cycle, then $\Decomp(\pi) = []$, where $[]$ is the empty list.
	\item If $\pi = \alpha \odot \gamma \odot \beta$ where $\gamma$ is the first short cycle occurring in $\pi$, then $\Decomp(\pi) = \mathtt{Concat}([\gamma],\Decomp(\alpha \odot \beta))$, where $\mathtt{Concat}$ is the list concatenation operator.  
\end{itemize}
Observe that if $\Decomp(\pi) = []$, then the length of $\pi$ is at most $|Q|-1$. Since the length of every short cycle is bounded by $|Q|$, the length of $\pi$ is asymptotically the \emph{same} as the number of elements in $\Decomp(\pi)$, assuming a fixed VASS~$\A$. 
  
Given a path $\pi = p_0,\bu_1,p_1,\bu_2,p_2,\ldots,\bu_n,p_n$ and an initial 
configuration $p_0 \bv_0$, the \emph{execution} of $\pi$ in $p_0 \bv_0$ is a 
finite sequence $p_0\bv_0,\ldots,p_n \bv_n$ of configurations where $\bv_i = 
\bv_0 + \bu_1 + 
\cdots + \bu_i$ for all $0 \leq i \leq n$. If $\bv_i \geq \vec{0}$ for all $0 
\leq i \leq n$, we say that  $\pi$ is \emph{executable} in $p_0\bv_0$.

\subsection{Termination Complexity of VASS}
\label{sec-VASS-term}

A \emph{zero-avoiding computation} of length $n$ initiated in a configuration 
$p\bv$ is a finite sequence of configurations $\alpha = q_0\bz_0, \ldots,q_n\bz_n$ 
initiated in $p\bv$ such that $\bz_i > \vec{0}$ for all $0 \leq i \leq n$, and 
for each $0\leq i<n$ there is a transition $(q_i,\bu,q_{i+1}) \in T$ where 
$\bz_{i+1} = \bz_i + \bu$. Every zero-avoiding computation $\alpha$ initiated in $q_0\bz_0$ determines a unique finite path $\pi_\alpha$ in $\A$ such that $\alpha$ is the execution of $\pi_\alpha$ in $q_0\bz_0$.

\begin{definition}
\label{def-VASS-compl}
Let $\A = \ce{Q,T}$ be a $d$-dimensional VASS. For every configuration $p\bv$ of $\A$, let $\plen(p\bv)$ be the least $\ell \in \N_\infty$ such that the length of every zero-avoiding finite computation initiated in $p\bv$ is bounded by $\ell$. The \emph{termination complexity} of $\A$ is a function $\term : \N \rightarrow \N$ defined by
\[
   \term(n) = \max \left\{\plen(p\bv) \mid p\bv \in \conf(\A) \mbox{ where } \size{p\bv} = n \right\}.
\]
If $\term(n) = \infty$ for some $n \in \N$, we say that $\A$ is \emph{non-terminating}, otherwise it is \emph{terminating}.
\end{definition}	
Observe that if $\A$ is non-terminating, then $\term(n) = \infty$ for all sufficiently large $n \in \N$. Further, if $\A$ is terminating, then $\term(n) \in \Omega(n)$. In particular, if $\term(n) \in \bigO(n)$, we also have $\term(n) \in \Theta(n)$.

Given a path $\pi = p_0,\bu_1,p_1,\bu_2,p_2,\ldots,\bu_n,p_n$ and an initial configuration $p_0 \bv_0$, the \emph{execution} of $\pi$ in $p_0 \bv_0$ is a finite sequence $p_0\bv_0,\ldots,p_n \bv_n$ where $\bv_i = \bv_0 + \bu_1 + \cdots + \bu_i$ for all $0 \leq i \leq n$. If $\bv_i \geq \vec{0}$ for all $0 \leq i \leq n$, we say that  $\pi$ is \emph{executable} in $p_0\bv_0$.

Let
\(
   \Inc = \{\eff(\pi) \mid \pi \text{ is a short cycle of } \A\}\,.
\)
The elements of $\Inc$ are called \emph{increments}. Note that if $\bu \in \Inc$, then $-|Q| \leq \bu(i) \leq |Q|$ for all $1 \leq i \leq d$. Hence, $|\Inc|$ is polynomial in $|Q|$, assuming $d$ is a fixed constant. Although the total number of all short cycles can be exponential in~$|Q|$, the set $\Inc$ is computable efficiently.\footnote{Note that Lemma~\ref{lem-inc-compute} would \emph{not} hold if we used simple cycles instead of short cycles, because the problem whether a given vector $\bv$ is an effect of a simple cycle is NP-complete, even if $d =1$ (NP-hardness follows, e.g., by a straightforward reduction of the Hamiltonian path problem).} 

\begin{lemma}
\label{lem-inc-compute}
	Let $\A= \ce{Q,T}$ be a $d$-dimensional VASS, and let $p \in Q$. The set $\Inc$ is computable in time $\bigO(\size{\A}^d)$, i.e., polynomial in $|Q|$ assuming $d$ is a fixed constant.
\end{lemma}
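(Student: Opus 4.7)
The crucial observation is that the effect of any short cycle has all coordinates in $\{-|Q|,\dots,|Q|\}$, hence $|\Inc| \leq (2|Q|+1)^d$ is itself polynomial in $|Q|$ for fixed $d$. Although there may be exponentially many distinct short cycles, so naive enumeration fails, the set of achievable effects is small and can be computed by a dynamic programming that aggregates over effects rather than over cycles. This is exactly what allows us to sidestep the $\mathsf{NP}$-hardness obstacle that rules out the analogous statement for \emph{simple} cycles.

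The plan is to introduce, for $0 \le \ell \le |Q|$ and $p,q \in Q$, the set
\[
   S_\ell(p,q) \;=\; \{\,\bu \in \Z^d \mid \text{some path from } p \text{ to } q \text{ of length exactly } \ell \text{ has effect } \bu\,\},
\]
and to compute these by dynamic programming on $\ell$. The base case is $S_0(p,p)=\{\vec 0\}$ and $S_0(p,q)=\emptyset$ for $p\neq q$, and the recurrence is
\[
   S_{\ell+1}(p,q) \;=\; \bigcup_{(q',\bw,q)\in T}\{\,\bu+\bw \mid \bu\in S_\ell(p,q')\,\}.
\]
An easy induction on $\ell$ shows that each $S_\ell(p,q)$ is exactly the set of effects of length-$\ell$ paths from $p$ to $q$; and from the definition of $\Inc$,
\[
   \Inc \;=\; \bigcup_{p\in Q}\,\bigcup_{\ell=1}^{|Q|} S_\ell(p,p).
\]

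For the complexity bound I would argue as follows. Since every vector occurring in $S_\ell(p,q)$ lies in $\{-\ell,\dots,\ell\}^d$, we have $|S_\ell(p,q)|\le(2\ell+1)^d=\bigO(|Q|^d)$. The DP table has $\bigO(|Q|^3)$ entries; filling one entry $S_{\ell+1}(p,q)$ scans the incoming transitions of $q$ and shifts at most $\bigO(|Q|^d)$ vectors, so the total work is polynomial in $|Q|$ for fixed $d$ and fits within the bound $\bigO(\size{\A}^d)$ claimed in the statement (the dominant factor $|Q|^d$ comes from the number of attainable effects, while the factors polynomial in $|Q|$ of fixed exponent are absorbed by the fact that the whole VASS has size at least $|Q|$).

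The only genuine insight is the first one: refusing to enumerate short cycles individually and instead performing a product-like BFS in the space $Q\times\{-|Q|,\dots,|Q|\}^d$. After that, the argument is a routine layered reachability computation. The footnote's contrast with simple cycles is consistent with this: simple cycles have no bounded vector-range pigeonhole to exploit, and deciding whether a given $\bu$ is the effect of some simple cycle is $\mathsf{NP}$-hard already in dimension one by reduction from Hamiltonian path, so no analogous polynomial-size DP can exist.
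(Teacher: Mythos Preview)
Your proposal is correct and follows essentially the same approach as the paper: both compute, by dynamic programming on the path length $\ell$, the sets of effects $S_\ell(p,q)$ (the paper calls them $E^k_{q,q'}$ and starts the recursion at length~$1$ rather than~$0$), and then read off $\Inc$ as the union of the diagonal sets. Your complexity discussion is in fact more explicit than the paper's, which simply asserts polynomiality in $|Q|$ for fixed~$d$ without spelling out the $(2\ell+1)^d$ bound on the size of each $S_\ell(p,q)$.
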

\begin{proof}
The set $\Inc$ is computable by the following standard algorithm: For all $q,q'\in Q$ and $1\leq k\leq n$, let $E^k_{q,q'}$ be the set of all effects of paths from $q$ to $q'$ of length exactly~$k$. Observe that 
\begin{itemize}
	\item $E^1_{q,q'}=\{\bu \mid (q, \bu, q')\in T\}$ for all $q,q'\in Q$;
	\item for every  $1<k\leq |Q|$, we have that
	$E^k_{q,q'}=\bigcup_{q''\in Q} \{\bv+\bu\mid \bv\in E^{k-1}_{q,q''}\text{ and } (q'',\bu,q')\in T\}$.
\end{itemize}
Obviously, $\Inc = \bigcup_{q\in Q} \bigcup_{k=1}^n E^k_{q,q}$, and the sets $E^k_{q,q'}$ for $k \leq |Q|$ are computable in time polynomial in $|Q|$, assuming $d$ is a fixed constant.
\end{proof}

A \emph{strongly connected component (SCC)} of $\A$ is maximal $R \subseteq Q$ such that for all $p,q \in R$ where $p \neq q$ there is a finite path from $p$ to $q$. Given a SCC $R$ of $Q$, we define the VASS $\A_R$ by restricting the set of control states to $R$ and the set of transitions to $T \cap (R \times \{-1,0,1\}^d \times R)$. We say that $\A$ is \emph{strongly connected} if $Q$ is a SCC of~$\A$.

\section{Linear Termination Time}
\label{sec-linear}

In this section, we give a complete and effective characterization of all VASS 
with \emph{linear} termination complexity. 

More precisely, we first provide a precise mathematical characterization of 
VASS with linear complexity: we show that if $\A$ is a $d$-dimensional VASS, 
then $\term(n) \in \bigO(n)$ iff there is an open half-space $\calH_{\bn}$ of 
$\R^d$ such that $\bn > \vec{0}$ and  $\Inc \subseteq \calH$. 

Next we show that the mathematical characterization of VASS of linear 
complexity is equivalent to the existence of a \emph{ranking function} of a 
special form for this VASS. We also show that existence of such a function for 
a given VASS $\A$ can 
be decided (and the function, if it exists, synthesized) in time polynomial in 
the size of $\A$. Hence, we obtain a sound and complete polynomial-time 
procedure for deciding whether a given VASS has linear termination complexity.

We start with the mathematical characterization.
Due to the next lemma, we can safely restrict ourselves to strongly connected VASS. A proof is trivial.

\begin{lemma}
	\label{lem-SCC}
	Let $d \in \N$, and let $\A = \ce{Q,T}$ be a $d$-dimensional VASS. Then $\term(n) \in  \bigO(n)$ iff $\term_R(n) \in \bigO(n)$ for every SCC $R$ of $Q$, where $\term_R(n)$ is the termination complexity of $\A_R$.
\end{lemma}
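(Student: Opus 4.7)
The plan is to prove the two implications separately, exploiting the DAG structure of the SCC quotient graph.

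For the easy ($\Rightarrow$) direction, I would observe that $\A_R$ is a sub-VASS of $\A$: every transition of $\A_R$ is a transition of $\A$ and the two share counter semantics. Consequently, every zero-avoiding computation in $\A_R$ from a configuration $p\bv$ with $p\in R$ is also a zero-avoiding computation in $\A$ from $p\bv$, giving $\plen_{\A_R}(p\bv)\leq \plen(p\bv)$. Taking the max over configurations of size $n$ yields $\term_R(n)\leq \term(n)$, so linearity of $\term$ immediately implies linearity of each $\term_R$.

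For the nontrivial ($\Leftarrow$) direction, my plan is to slice an arbitrary zero-avoiding computation in $\A$ by the SCCs of $Q$. Since the SCC quotient is a DAG, any finite computation visits a sequence of pairwise distinct SCCs $R_{i_1},R_{i_2},\ldots,R_{i_m}$ with $m\leq |Q|$. I would decompose the computation into maximal contiguous segments $\alpha_1,\ldots,\alpha_m$ that stay inside a single SCC, separated by single ``cross-SCC'' transitions. Each $\alpha_j$ is a zero-avoiding computation of $\A_{R_{i_j}}$ starting from a configuration of some size $n_j$, so by hypothesis $|\alpha_j|\leq c\cdot n_j$, where $c=\max_R c_R$ and $c_R$ is the linear constant of $\term_R$.

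The main (and only) obstacle is to control how the entry size $n_j$ can grow along the decomposition. Since every transition changes each counter by at most~$1$, I would use the bound $n_{j+1}\leq n_j+|\alpha_j|+1\leq (c+1)n_j+1$, which by induction gives $n_j\leq (c+2)^{j-1}\cdot n$ where $n=\size{p\bv}$ is the initial size. Summing, $|\alpha|=\sum_{j=1}^{m}|\alpha_j|+(m-1)\leq c\cdot |Q|\cdot(c+2)^{|Q|}\cdot n+|Q|$. Because $|Q|$ and $c$ are constants depending only on $\A$, this bound is in $\bigO(n)$, establishing $\term(n)\in\bigO(n)$. Taking the max over all starting configurations of size $n$ finishes the proof.
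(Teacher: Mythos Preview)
Your proposal is correct, and there is essentially nothing to compare it against: the paper does not prove this lemma at all, writing only ``A proof is trivial.'' Your argument supplies exactly the natural details---the forward direction via the sub-VASS inclusion, and the backward direction by slicing a computation along the SCC DAG, bounding the number of segments by~$|Q|$, and tracking the growth of configuration size across segments.

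One minor remark: the inequality $|\alpha_j|\leq c\cdot n_j$ presumes that $n_j$ exceeds the threshold implicit in the definition of $\bigO$, but intermediate $n_j$ can be small even when the initial $n$ is large (a segment may drive all counters close to~$1$). The standard fix is to use $|\alpha_j|\leq c\cdot n_j + b$ for a uniform additive constant~$b$ (which exists because each $\A_R$ is terminating, so $\term_R$ is finite everywhere); this propagates harmlessly through your recurrence and final bound.
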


Now we show that if there is \emph{no} open half-space $\calH_{\bn}$ such that $\bn > \vec{0}$ and  $\Inc \subseteq \calH_{\bn}$, then there exist short cycles $\gamma_1,\ldots,\gamma_k$ and coefficients $b_1,\ldots,b_k \in \N^+$ such that the sum $\sum_{i=1}^k b_i \cdot \eff(\gamma_i)$ is non-negative. Note that this does \emph{not} yet mean that $\A$ is non-terminating---it may happen that the cycles $\pi_1,\ldots,\pi_k$ pass through disjoint subsets of control states and cannot be concatenated without including auxiliary finite paths decreasing the counters.

\begin{lemma}
\label{lem-no-halfspace}
   Let $\A = \ce{Q,T}$ be a $d$-dimensional VASS. If there is no open half-space $\calH_{\bn}$ of $\R^d$ such that $\bn > \vec{0}$ and  $\Inc \subseteq \calH_{\bn}$, then there exist $\bv_1,\ldots,\bv_k \in \Inc$ and $b_1,\ldots,b_k \in \N^+$ such that $k \geq 1$ and $\sum_{i=1}^k b_i\bv_i \geq \vec{0}$.
\end{lemma}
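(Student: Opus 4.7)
The plan is to apply Farkas' lemma (equivalently, a Gordan/Motzkin-style theorem of alternatives) to an appropriate linear system.

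First, since $\Inc$ is finite, I would observe that the hypothesis---no open half-space $\calH_{\bn}$ with $\bn > \vec{0}$ contains $\Inc$---is equivalent to the infeasibility of the non-strict system $\bn \geq \vec{1}$ together with $\bv \cdot \bn \leq -1$ for every $\bv \in \Inc$. Indeed, any open-half-space witness $\bn$ can be scaled by $1/\min\{\min_i \bn(i),\,\min_{\bv \in \Inc}(-\bv \cdot \bn)\}$ (a positive number, because $\Inc$ is finite) to obtain a witness of the closed system, and the converse is immediate.

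Second, I would invoke Farkas' lemma on this infeasible system. Writing the constraints as $M\bn \leq \bc$ with $M$ the stack of $-I$ on top of the matrix $A$ whose rows are $\bv_1,\ldots,\bv_m$ enumerating $\Inc$, and $\bc$ the all-$(-1)$ vector of appropriate length, Farkas yields non-negative multipliers $\by \in \R_{\ge 0}^d$ and $\bz \in \R_{\ge 0}^m$ satisfying $\by = A^\top \bz = \sum_{i=1}^m z_i \bv_i$ and $\vec{1}\cdot\by + \vec{1}\cdot\bz > 0$. The first identity, combined with $\by \geq \vec{0}$, gives $\sum_i z_i \bv_i \geq \vec{0}$; the strict inequality then forces $\bz \neq \vec{0}$, since $\bz = \vec{0}$ would also yield $\by = \vec{0}$.

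Finally, restricting to the support $S = \{i : z_i > 0\}$ yields a representation $\sum_{i \in S} z_i \bv_i \geq \vec{0}$ with strictly positive coefficients. Since the whole system has integer data, a rational solution $\bz$ exists, and clearing denominators delivers the required positive integers $b_i \in \N^+$ with $\sum_{i \in S} b_i \bv_i \geq \vec{0}$, concluding the proof.

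The main obstacle is the correct translation of the open-half-space (strict) formulation of the hypothesis into a closed system amenable to Farkas' lemma, and then the extraction of \emph{positive} rather than merely non-negative integer multipliers. Both issues are resolved by the scaling and support-restriction arguments above, but each must be carried out carefully.
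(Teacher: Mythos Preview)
Your argument is correct, and it is considerably more direct than the paper's. The paper proceeds by a case distinction: either some \emph{closed} half-space $\hat{\calH}_{\bn}$ with $\bn>\vec{0}$ contains $\Inc$ (Case~(a)), or not (Case~(b)). In Case~(a) it extracts, via a maximality argument and Farkas' lemma, a specific $\bu\in\Inc$ with $-\bu\in\cone(\Inc)$; in Case~(b) it applies the strict hyperplane separation theorem to $\cone(\Inc)$ and a suitable compact convex set to find a non-negative combination inside that set. Each case then yields rational coefficients that can be cleared to integers.

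Your route bypasses this split entirely. The homogenization trick---replacing the strict conditions $\bn>\vec{0}$ and $\bv\cdot\bn<0$ by the scaled closed system $\bn\geq\vec{1}$, $\bv\cdot\bn\leq -1$, which is legitimate precisely because $\Inc$ is finite---lets a single invocation of Farkas' lemma deliver the non-negative combination directly, together with the non-triviality condition $\bz\neq\vec{0}$. The support restriction and rationality-of-solutions argument then finish cleanly. This is shorter and more uniform than the paper's proof. What the paper's Case~(a) buys in exchange is a slightly sharper structural byproduct (an explicit $\bu\in\Inc$ with $-\bu\in\cone(\Inc)$), but that extra information is not needed for the lemma as stated.
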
 
\begin{proof}
   We distinguish two possibilities.
   \begin{itemize}
   	\item[(a)] There exists a closed half-space $\hat{\calH}_{\bn}$ of $\R^d$ 
   	such that $\bn > \vec{0}$ and  $\Inc \subseteq \hat{\calH}_{\bn}$.
   	\item[(b)] There is no closed half-space $\hat{\calH}_{\bn}$ of $\R^d$ such that $\bn > \vec{0}$ and  $\Inc \subseteq \hat{\calH}_{\bn}$.
   \end{itemize}
   \textbf{Case~(a).} We show that there exists $\bu \in \Inc$ such that $\bu \neq \vec{0}$ and $-\bu \in \cone{}(\Inc)$. Note that this immediately implies the claim of our lemma---since $-\bu \in \cone{}(\Inc)$, there are $\bv_1,\ldots,\bv_k \in \Inc$ and $c_1,\ldots,c_k \in \R^+$ such that $-\bu = \sum_{i=1}^k c_i \bv_i$. Since all elements of $\Inc$ are vectors of non-negative integers, we can safely assume $c_i \in \Q^+$ for all $1\leq i \leq k$. Let $b$ be the least common multiple of $c_1,\ldots,c_k$. Then $b\bu + (b\cdot c_1) \bv_1 + \cdots + (b\cdot c_k) \bv_k = \vec{0}$ and we are done.
   
   It remains to prove the existence of $\bu$. Let us fix a normal vector $\bn > \vec{0}$ such that $\Inc \subseteq \hat{\calH}_{\bn}$ and the set $\Inc_{\bn} = \{ \bv \in \Inc \mid \bv \cdot \bn < 0\}$ is \emph{maximal} (i.e., there is no $\bn' > \vec{0}$ satisfying $\Inc \subseteq \hat{\calH}_{\bn'}$ and $\Inc_{\bn} \subset \Inc_{\bn'}$). Further, we fix $\bu \in \Inc$ such that $\bu \cdot \bn = 0$. Note that such $\bu \in \Inc$ must exist, because otherwise $\Inc_{\bn} = \Inc$ which contradicts the assumption of our lemma. We show $-\bu \in \cone{}(\Inc)$. Suppose the converse. Then by Farkas' lemma there exists a separating hyperplane for $\cone{}(\Inc)$ and $-\bu$ with normal vector $\bn'$, i.e., $\bv \cdot \bn' \leq 0$ for all $\bv \in \Inc$ and $-\bu \cdot \bn' > 0$. Since $\bn > \vec{0}$, we can fix a sufficiently small $\varepsilon > 0$ such that the following conditions are satisfied:
   \begin{itemize}
   	 \item $\bn + \varepsilon \bn' > \vec{0}$,
   	 \item for all $\bv \in \Inc$ such that $\bv \cdot \bn < 0$ we have that $\bv \cdot (\bn + \varepsilon \bn') < \vec{0}$.
   \end{itemize}
   Let $\bw = \bn + \varepsilon \bn'$. Then $\bw > 0$, $\bv \cdot \bw < 0$ for all $\bv \in \Inc_{\bn}$, and $\bu \cdot \bw = \bu \cdot \bn + \varepsilon (\bu \cdot \bn') = \varepsilon (\bu \cdot \bn') < 0$. This contradicts the maximality of $\Inc_{\bn}$.
   \smallskip
   
   \noindent
   \textbf{Case~(b).} Let $B = \{\bv \in \R^d \mid \bv \geq \vec{0} \text{ and }  1  \leq \sum_{i=1}^d \bv(i)  \leq  2\}$. We prove $\cone{}(\Inc) \cap B \neq \emptyset$, which implies the claim of our lemma (there are $\bv_1,\ldots,\bv_k \in \Inc$ and $c_1,\ldots,c_k \in \Q^+$ such that $\sum_{i=1}^k c_i \bv_i \in B$). Suppose the converse, i.e., $\cone{}(\Inc) \cap B = \emptyset$. Since both $\cone{}(\Inc)$ and $B$ are closed and convex and $B$ is also compact, we can apply the ``strict'' variant of hyperplane separation theorem. Thus, we obtain a vector $\bn \in\R^d$ and a constant $c \in \R$ such that $\bx \cdot \bn < c$ and $\by \cdot \bn > c$ for all $\bx \in \cone{}(\Inc)$ and $\by \in B$. Since $\vec{0} \in \cone{}(\Inc)$, we have that $c > 0$. Further, $\bn \geq \vec{0}$ (to see this, realize that if $\bn(i) < 0$ for some $1 \leq i \leq d$, then $\by \cdot \bn < 0$ where $\by(i) = 1$ and $\by(j) = 0$ for all $j \neq i$; since $\by \in B$ and $c > 0$, we have a contradiction). Now we show $\bx \cdot \bn \leq 0$ for all $\bx \in \cone{}(\Inc)$, which contradicts the assumption of Case~(b). Suppose $\bx \cdot \bn > 0$ for some $\bx \in \cone{}(\Inc)$. Then $(m \cdot \bx) \cdot \bn > c$ for a sufficiently large $m \in \N$. Since $m \cdot \bx \in \cone{}(\Inc)$, we have a contradiction.  
\end{proof}

Now we give the promised characterization of all VASS with linear termination complexity. Our theorem also reveals that the VASS termination complexity is either linear or at least quadratic (for example, it cannot be that $\term(n) \in \Theta(n\log n)$).
\begin{theorem}
\label{thm-linear}
  Let $\A = \ce{Q,T}$ be a $d$-dimensional VASS. We have the following:
  \begin{itemize}
  	 \item[(a)] If there is an open half-space $\calH_{\bn}$ of $\R^d$ such that $\bn > \vec{0}$ and  $\Inc \subseteq \calH_{\bn}$, then $\term(n) \in \bigO(n)$.
  	 \item[(b)] If there is no open half-space $\calH_{\bn}$ of $\R^d$ such that $\bn > \vec{0}$ and  $\Inc \subseteq \calH_{\bn}$, then $\term(n) \in \Omega(n^2)$.
  \end{itemize}
\end{theorem}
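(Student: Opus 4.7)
The plan is to establish (a) via a linear ranking function and (b) by an explicit lower-bound construction. In both cases I work inside a strongly connected VASS: for (a) this is justified by Lemma~\ref{lem-SCC} (applied with the same covering normal in every SCC, since $\Inc(\A_R)\subseteq\Inc$), and for (b) any $\Omega(n^2)$ bound obtained in one SCC propagates to the full VASS by restricting to initial configurations in that SCC.

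For part (a), fix $\bn>\vec{0}$ with $\Inc\subseteq\calH_{\bn}$ and introduce the linear potential $\phi(p\bz):=\bz\cdot\bn$. Along every zero-avoiding computation $\bz\geq\vec{0}$ and $\bn>\vec{0}$ force $\phi\geq 0$, and $\phi$ starts at most $c\cdot n$ (with $c$ depending only on $\bn$) when $\size{p\bv}=n$. Since $\Inc$ is finite by Lemma~\ref{lem-inc-compute}, $\delta:=\min_{\bv\in\Inc}(-\bv\cdot\bn)>0$. I then decompose the underlying path $\pi$ of any zero-avoiding computation via $\Decomp(\pi)$: each of the $m$ short cycles drops $\phi$ by at least $\delta$, while the residual (of length $<|Q|$) perturbs $\phi$ by $O(1)$. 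Non-negativity of $\phi$ forces $m=O(n)$, and since the length of $\pi$ is $\Theta(m)$ plus an $O(|Q|)$ residual, I conclude $\term(n)\in\bigO(n)$.

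For part (b), Lemma~\ref{lem-no-halfspace} supplies short cycles $\gamma_1,\ldots,\gamma_k$ based at states $q_1,\ldots,q_k$ and positive integers $b_1,\ldots,b_k$ with $\sum_i b_i\eff(\gamma_i)\geq\vec{0}$. Strong connectivity provides short connecting paths $\sigma_i$ from $q_i$ to $q_{(i\bmod k)+1}$. I define a \emph{super-cycle of multiplicity $m$} as the concatenation, for $i=1,\ldots,k$, of $mb_i$ iterations of $\gamma_i$ followed by $\sigma_i$; this has length $\Theta(m)$, and its net effect is $m\sum_i b_i\eff(\gamma_i)+\sum_i\eff(\sigma_i)\geq-E\cdot\vec{1}$ for some constant $E$ independent of $m$. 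Starting from the configuration $q_1\vec{n}$ and concatenating $r$ super-cycles produces a path of length $\Theta(rm)$ whose counter values at super-cycle boundaries are at least $n-rE$.

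The main obstacle is legality of intermediate configurations \emph{inside} a super-cycle: during the $mb_i$ copies of $\gamma_i$ a counter can transiently drop by up to $mb_i\cdot|Q|=\Theta(m)$, since each component of $\eff(\gamma_i)$ is bounded by $|Q|$. I therefore balance the two scales by setting $m=\alpha n$ for a sufficiently small constant $\alpha$ (depending on $k$, $|Q|$, and the $b_i$), ensuring intra-super-cycle dips never exceed $n/2$. With this choice I can afford $r=\Theta(n)$ super-cycles before the accumulated decrease $rE$ consumes the remaining $n/2$ budget, yielding a legal zero-avoiding computation of length $\Theta(rm)=\Theta(n^2)$ and hence $\term(n)\in\Omega(n^2)$.
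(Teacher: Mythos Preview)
Your argument for (a) coincides with the paper's: the linear potential $\bz\cdot\bn$ drops by at least $\delta=\min_{\bv\in\Inc}(-\bv\cdot\bn)>0$ on every short cycle, so $|\Decomp(\pi_\alpha)|=O(n)$ and hence the length of $\alpha$ is $O(n)$.

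For (b) in the strongly connected case, both you and the paper start from the short cycles $\gamma_i$ and weights $b_i$ supplied by Lemma~\ref{lem-no-halfspace}, but the assembly is different. The paper threads a single cycle $\pi$ through all of $Q$ and, at each round, inserts the \emph{maximal} feasible multiplicity $t=t(\bu)$ of each $\gamma_i$ into $\pi$; since $\eff(\pi_t)\geq\eff(\pi)$ regardless of $t$, the configuration after round $i$ dominates $\vec{n}+i\cdot\eff(\pi)$, so one gets $\Theta(n)$ rounds of lengths $\Theta(n),\Theta(n-e),\Theta(n-2e),\dots$, a triangular sum $\Theta(n^2)$. You instead fix the multiplicity $m=\alpha n$ once and for all, glue the $\gamma_i$ by short connectors $\sigma_i$, and argue that the net effect of one super-cycle is $\geq -E\cdot\vec{1}$ (a constant) while the intra-super-cycle dip is $O(m)$; choosing $\alpha$ small makes the dip at most $n/2$, leaving room for $\Theta(n)$ super-cycles of length $\Theta(n)$ each, a rectangular product $\Theta(n^2)$. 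Both constructions are correct; yours avoids the adaptive choice of $t(\bu)$ at the price of calibrating $\alpha$, while the paper's absorbs the dip analysis into the definition of $t(\bu)$.

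One caveat, which you share with the paper: for (b) you write that ``any $\Omega(n^2)$ bound obtained in one SCC propagates to the full VASS'', which is true, but you still need that the hypothesis of (b) descends to \emph{some} SCC, i.e.\ that some $\Inc(\A_R)$ also lacks a positive covering normal. This is not automatic---two disjoint SCCs with increments $\{(-1,2)\}$ and $\{(2,-1)\}$ each admit a positive covering normal while their union does not. The paper invokes Lemma~\ref{lem-SCC} at the same point with no further argument, so you match it here; but the step deserves justification, or the statement should be read for strongly connected~$\A$.
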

\begin{proof}
  We start with~(a). Let $\calH_{\bn}$ be an open half-space of $\R^d$ such that $\bn > \vec{0}$ and  $\Inc \subseteq \calH_{\bn}$, and let $q\bu$ be a configuration of~$\A$. Note that 
  $\lceil \bn \cdot \bu \rceil \in \bigO(\size{q\bu)})$ because $\bn$ does not depend on $q\bu$. Let $\delta = \min_{\bv \in \Inc} |\bv \cdot \bn|$. Each short cycle decreases the scalar product of the normal $\bn$ and vector of counters by at least $\delta$. Therefore, for every zero-avoiding computation $\alpha$ initiated in $q\bu$ we have that $\Decomp(\alpha)$ contains at most 
  $\bigO(\size{q\bu})$ elements, so the length of $\alpha$ is $\bigO(\size{q\bu})$. 
  
  Now suppose there is no open half-space $\calH_{\bn}$ of $\R^d$ such that 
  $\bn > \vec{0}$ and  $\Inc \subseteq \calH_{\bn}$. We show that $\term(n) \in 
  \Omega(n^2)$, i.e., there exist $p \in Q$ and a constant $a \in \R^+$ such 
  that for all configurations $p \vec{n}$, where $n \in \N$ is sufficiently 
  large, there is a zero-avoiding computation initiated in $p\vec{n}$ whose 
  length is at least $a \cdot n^2$. Due to Lemma~\ref{lem-SCC}, we can safely 
  assume that $\A$ is strongly connected. By Lemma~\ref{thm-linear}, there are 
  $\bv_1,\ldots,\bv_k \in \Inc$ and $b_1,\ldots,b_k \in \N^+$ such that $k \geq 
  1$ and 
  \begin{equation}\label{eq:growingCycles}
\sum_{i=1}^k b_i\bv_i \geq \vec{0}.
\end{equation}
    As the individual short cycles with effects 
  $\bv_1,\ldots,\bv_k$ may proceed through disjoint sets of states,  
   they \emph{cannot} be trivially concatenated into one large 
  cycle with non-negative effect. Instead, we fix a control state $p \in Q$ and 
  a cycle $\pi$ initiated in $p$ visiting \emph{all} states of~$Q$ (here we 
  need that $\A$ is strongly connected). Further, for every $1 \leq i \leq k$ 
  we fix a short cycle $\gamma_i$ such that $\eff(\gamma_i) = \bv_i$. For every 
  $t \in \N$, let $\pi_t$ be a cycle obtained from $\pi$ by inserting precisely 
  $t\cdot b_i$ copies of every $\gamma_i$, where $1 \leq i \leq k$. Observe 
  that the inequality~(\ref{eq:growingCycles}) implies
  \begin{equation}\label{eq:iteratingCycles}
  \eff(\pi_t) = \eff(\pi) + t\cdot \sum_{i=1}^k b_i\bv_i \geq \eff(\pi)  \quad \text{ for every } t \in \N.
\end{equation}  
   For every configuration 
  $p\bu$, let $t(\bu)$ be the largest $t \in \N$ such that $\pi_t$ is 
  executable in $p\bu$ and results in a zero-avoiding computation. If 
  such a
  $t(\bu)$ does not exist, i.e. $\pi_t$ is executable in $p\bu$ for all $t\in 
  \N$, then $\A$ is non-terminating (since, e.g. $\bv_1$ must be non-negative 
  in such a case), and the proof is finished. Hence, we can assume that 
  $t(\bu)$ is well-defined for each $\bu$. Since the 
  cycles $\pi$ and 
  $\gamma_1,\ldots,\gamma_k$ have fixed 
  effects, there is $b \in \R^+$ such that for all configurations $p\bu$ where 
  all components of $\bu$ (and thus also $\size{p\bu}$) are above some 
  sufficiently large threshold $\xi$ 
we have that
  $t(\bu) \geq b \cdot 
  \size{p\bu}$, i.e. $t(\bu)$ grows asymptotically at least linearly with the 
  minimal component of $\bu$. Now, for every $n \in \N$, consider a 
  zero-avoiding computation 
  $\alpha(n)$ initiated in $p \vec{n}$ defined inductively as follows:  
  Initially, $\alpha(n)$ consists just of  $p\bu_0 = p \vec{n}$; if the 
  prefix of $\alpha(n)$ constructed so far ends in a configuration $p\bu_i$ 
  such that $t(\bu_i) \geq 1$ and $\bu_i \geq \vec{\xi}$ (an 
  event we call a \emph{successful hit}), then 
  the prefix is prolonged by executing the 
  cycle $\pi_{t(\bu_i)}$ (otherwise, the construction of $\alpha(n)$ stops). 
  Thus, $\alpha(n)$ is obtained from $p\vec{n}$ by applying the inductive rule 
  $I(n)$ times, where $I(n) \in \N_{\infty}$ is the number of successful hits 
  before the construction of $\alpha(n)$ stops.
  Denote by $p\bu_i$ the configuration visited by $\alpha(n)$ at $i$-th 
  successful hit. Now the inequality~(\ref{eq:iteratingCycles}) implies that  
  $\bu_i \geq \vec{n} + i \cdot \eff(\pi)$, so there exists a constant $e$ such 
  that $\size{p\bu_i}\geq n - i \cdot e$. In 
  particular 
  the 
  decrease of all components 
  of $\bu_i$ is at most linear in~$i$. This means that $I(n) \geq c \cdot n$ 
  for all sufficiently large $n \in \N$, where $c \in \R^+$ is a suitable 
  constant. But at the same time, upon each successful hit we have $\bu_i \geq 
  \vec{\xi}$, so length of the segment beginning with $i$-th successful hit and 
  ending with the $(i+1)$-th hit or with the last configuration of $\alpha(n)$ 
  is at least $b\cdot \size{p\bu_i}\geq b\cdot( n-i\cdot e)$.
  Hence, the length of $\alpha(n)$ is at least $\sum_{i=1}^{c\cdot n}b\cdot 
  (n-i\cdot e)$, i.e. quadratic.
\end{proof}

\begin{example}
Consider the VASS in Figure~\ref{fig:comp-ex3}. It consists of two strongly 
connected components, $\{q_1\}$ and $\{q_2\}$. In $\A_{\{ q_1\}}$ we have 
$\Inc=\{(-1,1)\}$. For $\bn=(1,\frac{1}{2})$ the open half-space 
$\hat{\calH}_{\bn}$ contains $\Inc$. Similarly, in $\A_{\{ q_2\}}$ we have 
$\Inc=\{(-1,0),(1,-1)\}$. For $\bn=(1,2)$ we again have that $\Inc$ is 
contained in open half-space $\hat{\calH}_{\bn}$. Hence, the VASS has linear 
termination complexity.

Now consider the VASS in Figure~\ref{fig:comp-ex1}. It is strongly connected 
and $\Inc=\{(-1,1),(-2,2),(1,-1),(2,-2),(-1,0)\}$. But there cannot be an open 
2-dimensional half-space (i.e. an open half-plane) containing two opposite 
vectors, e.g. $(-1,1)$ and $(1,-1)$, because for any line going through the 
origin such that $(-1,1)$ does not lie on the line it holds that $(1,-1)$ lies 
on the ``other side'' of the line than $(-1,1)$. Hence, the VASS in 
Figure~\ref{fig:comp-ex1} has at least quadratic termination complexity. The 
same 
argument applies to VASS in Figure~\ref{fig:comp-ex2}.
\end{example}

A straightforward way of checking the condition of Theorem~\ref{thm-linear} is 
to construct the corresponding linear constraints and check their feasibility 
by linear programming. This would yield an algorithm polynomial in $|\Inc|$, 
i.e., polynomial in $|Q|$ for every fixed dimension~$d$. Now we show that the 
condition can actually be checked in time \emph{polynomial in the size of 
$\A$}. We do this by showing that the mathematical condition stated in 
Theorem~\ref{thm-linear} is equivalent to the existence of a ranking function 
of a special type
for a given VASS. Formally, a \emph{weighted linear map}  for a VASS 
$\A=(Q,T)$ 
is defined by a vector of coefficients $\bc$ and by a set of weights 
$\{h_q\mid q\in Q \}$, one constant for each state of $\A$. The  
weighted linear map $\mu=(\bc,\{h_q\mid q\in Q\})$ defines a function (which 
we, slightly abusing the notation, also denote by $\mu$) assigning numbers to 
configurations as follows: $\mu(p\bv)=\bc\cdot\bv + h_p$. A weighted linear map 
$\mu$ is a \emph{weighted linear ranking (WLR) function} for $\A$ if $\bc\geq 
\vec{0}$ and 
there 
exists 
$\epsilon>0$ such that for each configuration $p\bv$ and each transition 
$(p,\bu,q)$ it holds $\mu(p\bv)\geq \mu(q(\bv+\bu)) + \epsilon$, which is 
equivalent, due to linearity, to 
\begin{equation}\label{eq:linearRankingFunction}
h_p-h_q\geq \bc\cdot \bu + \epsilon
\end{equation}
We show 
that weighted linear ranking functions provide a sound and complete method for 
proving linear termination complexity of VASS.

\begin{theorem}
\label{thm-linear-polytime} 
   Let $d \in \N$. The problem whether the termination complexity of a given 
   $d$-dimensional VASS is linear is solvable in time polynomial in the size 
   of~$\A$. More precisely, the termination complexity of a VASS $\A$ is linear 
   if and only if there exists a weighted linear ranking function for $\A$. 
   Moreover, 
   the existence of a weighted linear ranking function for $\A$ can be decided 
   in time 
   polynomial in $\size{\A}$.
\end{theorem}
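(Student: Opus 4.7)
The proof plan is to establish a chain of equivalences: linear complexity $\Leftrightarrow$ geometric condition of Theorem~\ref{thm-linear} $\Leftrightarrow$ existence of a WLR function, and then to give a linear-program encoding of the WLR condition.

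\textbf{Soundness (WLR $\Rightarrow$ linear complexity).} This direction is the easy one. Suppose $\mu = (\bc,\{h_q\}_{q \in Q})$ is a WLR function with parameter $\epsilon > 0$. For any zero-avoiding computation $q_0\bz_0,\ldots,q_n\bz_n$ starting in $q_0\bz_0$, the inequality~(\ref{eq:linearRankingFunction}) applied along every step yields $\mu(q_0\bz_0) - \mu(q_n\bz_n) \geq n\epsilon$. Since $\bc\geq\vec{0}$ and $\bz_n \geq \vec{0}$, the value $\mu(q_n\bz_n)$ is bounded below by $\min_{q \in Q} h_q$, while $\mu(q_0\bz_0) \leq \|\bc\|_1\cdot \size{q_0\bz_0} + \max_{q \in Q} h_q$. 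Rearranging gives $n \in \bigO(\size{q_0\bz_0})$, hence $\term(n) \in \bigO(n)$.

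\textbf{Completeness (linear complexity $\Rightarrow$ WLR exists).} By Theorem~\ref{thm-linear}(b), if $\term(n) \in \bigO(n)$ then there is $\bn > \vec{0}$ with $\Inc \subseteq \calH_{\bn}$. Set $\bc := \bn$. I then need to produce the state weights $h_q$. The plan is to view this as a potential-function problem on the transition graph with edge weights $w(p,q,\bu) := \bc \cdot \bu + \epsilon$ for a suitable $\epsilon > 0$: inequality~(\ref{eq:linearRankingFunction}) is exactly the condition that $(h_p)$ is a super-potential for $w$, which by the standard shortest-path argument (adding a dummy source with zero-weight edges and taking single-source shortest paths in the reversed graph) is feasible iff $w$ has no positive-weight cycle. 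The main obstacle is to choose $\epsilon$ small enough so that no such positive cycle exists. The argument is: let $\delta := \min_{\bv \in \Inc} (-\bc \cdot \bv) > 0$ (well-defined since $\Inc$ is finite). For any cycle $\gamma$, decompose it via $\Decomp(\gamma) = [\gamma_1,\ldots,\gamma_k]$; since $\gamma$ is itself a cycle the residue is trivial, so $\eff(\gamma) = \sum_i \eff(\gamma_i)$ and $|\gamma| = \sum_i |\gamma_i| \leq k\cdot|Q|$. Hence $\bc \cdot \eff(\gamma) \leq -k\delta \leq -\frac{\delta}{|Q|}|\gamma|$, so taking $\epsilon := \delta/|Q|$ makes every cycle sum of $w$ non-positive, yielding the desired $h_q$.

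\textbf{Polynomial-time decidability.} By scaling $(\bc, h, \epsilon) \mapsto (\bc/\epsilon, h/\epsilon, 1)$, which preserves $\bc \geq \vec{0}$, the existence of a WLR function is equivalent to feasibility of the linear program with variables $\bc \in \R^d$, $\{h_q\}_{q \in Q}$, and constraints
\[
   \bc \geq \vec{0}, \qquad h_p - h_q - \bc \cdot \bu \geq 1 \quad \text{for every } (p,\bu,q) \in T.
\]
This LP has $d + |Q|$ variables and $d + |T|$ constraints, both polynomial in $\size{\A}$, so feasibility (and synthesis of a solution) can be decided in polynomial time by any standard LP algorithm (e.g.\ the ellipsoid method). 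Combining the three steps gives the full theorem. The main technical content is the cycle-decomposition bound in the completeness direction; the LP encoding then automatically bypasses the potentially exponential-in-$d$ enumeration of $\Inc$ that a direct use of Theorem~\ref{thm-linear} would require.
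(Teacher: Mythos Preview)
Your proof is correct. It reaches the same endpoint as the paper's proof---linear complexity $\Leftrightarrow$ existence of a WLR function, decided via a polynomial-size LP over $(\bc,h)$---but the route to the equivalence is different. The paper establishes the equivalence by interpreting the WLR constraints as the dual of a mean-payoff problem: it sets up the LP of minimising $g$ subject to $g+h(q)-h(q')\geq \bu\cdot\bn$ and $\bn\geq\vec{0}$, invokes results from~\cite{Puterman:1994} to identify the optimum $g$ with $\sup_\pi \mathit{MC}_{\bn}(\pi)$, and then argues via short-cycle decomposition that $g<0$ iff the geometric condition of Theorem~\ref{thm-linear} holds. You instead prove soundness by a direct telescoping argument (bypassing Theorem~\ref{thm-linear}(a) entirely), and for completeness you construct the weights $h_q$ explicitly: you use Theorem~\ref{thm-linear}(b) to obtain $\bc=\bn>\vec{0}$, then bound the $w$-weight of every cycle via the short-cycle decomposition and invoke the Bellman--Ford/difference-constraint criterion to get a feasible potential. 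Your approach is more self-contained (no appeal to mean-payoff theory or~\cite{Puterman:1994}) and makes the structure of the argument transparent; the paper's approach has the virtue of situating the result within the well-developed mean-payoff framework. Both arguments ultimately rely on the same combinatorial core---the short-cycle decomposition bound---and arrive at essentially the same LP, so the difference is one of presentation and external dependencies rather than of underlying ideas.
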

\begin{proof}[Proof Sketch]
In the course of the proof we describe a polynomial time-algorithm for deciding 
whether given VASS has linear termination complexity. Once the 
algorithm is described, we will show that what it really does is checking the 
existence of a weighted linear ranking function for $\A$.
	
   Let us start by sketching the underlying intuition. Our goal is to decide, in polynomial time, whether there is an open half-space $\calH_{\bn}$ of $\R^d$ such that $\bn > \vec{0}$ and  $\Inc \subseteq \calH_{\bn}$. Note that this is equivalent to deciding whether there is an open half-space $\calH_{\bn}$ of $\R^d$ such that $\bn \geq \vec{0}$ and  $\Inc \subseteq \calH_{\bn}$ (since we demand $\calH_{\bn}$ to be open and the scalar product is continuous, $\bn\geq \vec{0}$ can be slightly tilted by adding a small $\vec{\delta}>0$ to obtain a positive vector with the~desired property).
    
   Given a vector $\bn\in \mathbb{R}^d$ and a configuration $q\bv$, we say that $\bv\cdot \bn$ is the \emph{$\bn$-value of $q\bv$}. Observe that if there is an open half-space $\calH_{\bn}$ such that $\bn \geq \vec{0}$ and  $\Inc \subseteq \calH_{\bn}$, then there is $\varepsilon>0$ such that the effect of every short cycle decreases the $\bn$-value of a configuration by at least $\varepsilon$. As every path can be decomposed into short cycles, every path steadily decreases the $\bn$-value of visited configurations. It follows that the mean change (per transition) of the $\bn$-value along an infinite path is bounded from above by $-\varepsilon/|Q|$. On the other hand, if the maximum mean change in $\bn$-values (over all infinite paths) is bounded from above by some negative constant, then every short cycle must decrease the $\bn$-value by at least this constant. So, it suffices to decide whether there is $\bn\geq \vec{0}$ such that for all infinite paths the mean change of the $\bn$-value is negative. Thus, we reduce our problem to the classical problem of maximizing the mean payoff over a decision process with rewards.
   Using standard results (see, e.g.,~\cite{Puterman:1994}), the latter problem polynomially reduces to the problem of solving a~linear program that is (essentially) equivalent to the inequality~(\ref{eq:linearRankingFunction}).
   Finally, the linear program can be solved in polynomial time using e.g.~\cite{Khachiyan:1979}.
\end{proof}

\begin{remark}
	The weighted linear ranking functions can be seen as a special case of 
	well-known linear ranking functions for linear-arithmetic 
	programs~\cite{DBLP:conf/tacas/ColonS01,DBLP:conf/vmcai/PodelskiR04}, in 
	particular state-based linear ranking functions, where a linear function of 
	program variables is assigned to each state of the control flow graph. WLR 
	ranking functions are indeed a special case, since the linear functions 
	assigned to various state are almost identical, and they differ only in the 
	constant coefficient $h_q$. Also, as the proof of the previous theorem 
	shows, 
	WLR functions in VASS can be computed directly by linear programming, 
	without 
	the need for any ``supporting invariants,'' since effect of a transition in 
	VASS is independent of the current values of the counters. Also, 
	well-foundedness (i.e. the fact that the function is bounded from below) is 
	guaranteed by the fact that $\bn\geq 0$ and counter values in VASS are 
	always 
	non-negative. It is a common knowledge that the existence of a state-based 
	linear ranking function for a 
	linear arithmetic program implies that the running time of the program is 
	linear in the initial valuation of program variables. Hence, our main 
	result 
	can be interpreted as proving that for VASS, state-based linear ranking 
	functions are both sound and~\emph{complete} for proving linear termination 
	complexity.
\end{remark}

\section{Polynomial termination time}
\label{sec-poly}

In this section we concentrate on VASS with polynomial termination complexity. For simplicity, we restrict ourselves to \emph{strongly connected VASS}. As we already indicated in Section~\ref{sec-intro}, our analysis proceeds by considering properties of \emph{normal vectors} perpendicular to hyperplanes covering the vectors of $\Inc$.  

\begin{definition}
	\label{def-normals}
	Let $\A = \ce{Q,T}$ be a $d$-dimensional VASS. The set $\Normals(\A)$ consists of all $\bn \in \R^d$ such that $\bn \neq \vec{0}$ and $\Inc \subseteq \hat{\calH}_{\bn}$ (i.e., $\bv \cdot \bn \leq 0$ for all $\bv \in \Inc$). 
\end{definition}

Let $\A$ be a strongly connected VASS. We distinguish four possibilities.
\begin{itemize}
   \item[(A)] $\Normals(\A) = \emptyset$.
   \item[(B)] $\Normals(\A) \neq \emptyset$ and all $\bn \in \Normals(\A)$ have a negative component.
   \item[(C)] There exists $\bn \in \Normals(\A)$ such that $\bn > \vec{0}$.
   \item[(D)] There exists $\bn \in \Normals(\A)$ such that $\bn \geq \vec{0}$ and~(C) does not hold.
\end{itemize}
Note that one can easily decide which of the four conditions holds by linear programming. Due to Lemma~\ref{lem-inc-compute}, the decision algorithm is polynomial in the number of control states of $\A$ (assuming $d$ is a fixed constant).  

We start by showing that a VASS satisfying (A) or (B) is non-terminating. A proof is given in Section~\ref{sec-condAB}.
\begin{theorem}
\label{thm-condAB}
	Let $\A = \ce{Q,T}$ be a $d$-dimensional strongly connected VASS such that (A)~or~(B) holds. Then $\A$ is non-terminating.
\end{theorem}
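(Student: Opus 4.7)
The proof reduces to a common core: in both cases~(A) and~(B), $\cone(\Inc)$ must contain a strictly positive vector, and once such a vector is in hand, strong connectivity lets me bundle short cycles into a single control cycle whose unbounded iteration yields an infinite zero-avoiding run. The crux is establishing the strict positivity, which I would handle uniformly as follows. Suppose for contradiction that $\cone(\Inc) \cap \{\bx \in \R^d : \bx > \vec{0}\} = \emptyset$. Since $\cone(\Inc)$ is a closed convex cone and the open positive orthant is a nonempty open convex set, the proper separation theorem yields a nonzero $\bn \in \R^d$ with $\sup_{\bx \in \cone(\Inc)} \bn \cdot \bx \leq \inf_{\by > \vec{0}} \bn \cdot \by$. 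The supremum over a cone is either $0$ or $+\infty$, so it equals $0$, giving $\bn \cdot \bx \leq 0$ for every $\bx \in \cone(\Inc)$, i.e.\ $\bn \in \Normals(\A)$; and any negative component of $\bn$ would drive the infimum to $-\infty$, so $\bn \geq \vec{0}$. In case~(A) this contradicts $\Normals(\A) = \emptyset$, and in case~(B) it contradicts the hypothesis that every normal has a negative component. Hence $\cone(\Inc)$ contains some $\bw > \vec{0}$; by perturbation and clearing of denominators (justified because $\Inc \subseteq \Z^d$ and strict positivity is an open condition) I obtain $\bv_1,\ldots,\bv_k \in \Inc$ and $b_1,\ldots,b_k \in \N^+$ with $\sum_{i=1}^k b_i \bv_i > \vec{0}$.

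With the integer witness in hand, I would construct an ``inflating'' cycle. For each $i$ I fix a short cycle $\gamma_i$ with $\eff(\gamma_i) = \bv_i$ and let $q_i$ be its starting state. Using strong connectivity, I fix a control state $p$ and a single cycle $\pi$ from $p$ that visits every $q_i$. For every $t \in \N^+$, let $\pi_t$ be the cycle obtained from $\pi$ by inserting $t \cdot b_i$ copies of $\gamma_i$ at the visit to $q_i$; by construction $\eff(\pi_t) = \eff(\pi) + t \sum_{i=1}^k b_i \bv_i$, so for all $t$ large enough $\eff(\pi_t) \geq \vec{0}$. Fix such a $t$.

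Finally, I would iterate $\pi_t$ forever. Since $\pi_t$ has a fixed length, its intermediate cumulative effects are bounded below; hence there is $M \in \N$ such that executing $\pi_t$ from any configuration $p\bu$ with $\bu \geq \vec{M}$ keeps every counter non-negative throughout. Starting from $p\vec{M}$, each traversal of $\pi_t$ returns to state $p$ with the counter vector changed by $\eff(\pi_t) \geq \vec{0}$, hence still $\geq \vec{M}$, so $\pi_t$ is executable again. Iterating produces an infinite zero-avoiding computation from $p\vec{M}$, giving $\plen(p\vec{M}) = \infty$ and thus non-termination of $\A$. The main obstacle is the separation argument in the first step, since it must translate both geometric hypotheses on $\Normals(\A)$ into a single concrete positivity witness in $\cone(\Inc)$; Steps~2 and~3 reuse and strengthen the bridging-cycle construction already appearing in the proof of Theorem~\ref{thm-linear}, with the strict positivity of $\sum_i b_i \bv_i$ being exactly what permits indefinite rather than merely polynomial iteration.
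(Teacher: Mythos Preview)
Your proposal is correct and follows essentially the same approach as the paper: both use a separating-hyperplane argument to show that $\cone(\Inc)$ contains a strictly positive vector (contradicting the existence of a nonnegative normal in cases~(A) and~(B)), clear denominators to get integer multiplicities, and then splice the corresponding short cycles into a spanning cycle to obtain a cycle of nonnegative (in the paper, strictly positive) effect that can be iterated indefinitely from a sufficiently large initial configuration. Your write-up is in fact somewhat more explicit than the paper's about the intermediate-effect bound needed for the final iteration step.
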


\subsection{VASS satisfying condition~(C)}

Assume $\A$ is a $d$-dimensional VASS satisfying~(C). We prove that if $\A$ is terminating, then $\term(n) \in \Theta(n^\ell)$ for some $\ell \in \{1,\ldots,d\}$. Further, there is a polynomial-time algorithm deciding whether $\A$ is terminating and computing the constant $\ell$ if it exists (assuming $d$ is a fixed constant).

A crucial tool for our analysis is a \emph{good normal}, introduced in the next definition.

\begin{definition}
\label{def-good-normal}
	Let $\A$ be a VASS. We say that a normal $\bn \in \Normals(\A)$ is \emph{good} if $\bn > \vec{0}$ and for every $\bv \in \cone{}(\Inc)$ we have that $-\bv \in \cone{}(\Inc)$ iff $\bv \cdot \bn = 0$.
\end{definition}

\begin{example}\label{ex:good_normal}
Consider the VASS of Fig.~\ref{fig:comp-ex1}. Here, a good normal is, e.g., the vector $\bn=(1, 1)$. Observe that the effects of both self-loops (on $q_1$ and $q_2$) belong to the hyperplane defined by $(1, 1)$. Note that these loops compensate each other's effects so long as we stay in the hyperplane (this is the defining property of the good normal). This allows us to zig-zag in the plane without "paying" with decrements in the $\bn$-value except when we need to switch between the loops (recall that the $\bn$-value of a configuration $q\bv$ is the product $\bv\cdot \bn$). This produces a path of quadratic length, which is asymptotically the worst case.
\end{example}

The next lemma says that a good normal always exists and it is computable efficiently. A proof can be found in Section~\ref{sec-good-normal}.

\begin{lemma}
\label{lem-good-normal}
    Let $\A$ be a $d$-dimensional VASS satisfying~(C). Then there exists a good normal computable in time polynomial in $|Q|$, assuming $d$ is a fixed constant. 
\end{lemma}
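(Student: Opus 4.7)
The plan is to construct a normal $\bn \in \Normals(\A)$ with $\bn > \vec{0}$ that is strictly negative on every element of $\Inc \setminus L$, where $L = \cone(\Inc) \cap (-\cone(\Inc))$ is the lineality space of the cone of increments. Such an $\bn$ is automatically good: expanding any $\bv = \sum_{\bu \in \Inc} c_{\bu}\bu \in \cone(\Inc)$ with $c_{\bu} \geq 0$, the equality $\bv \cdot \bn = \sum_{\bu} c_{\bu}(\bu \cdot \bn) = 0$ together with $\bu \cdot \bn \leq 0$ for every $\bu \in \Inc$ forces $c_{\bu} = 0$ whenever $\bu \in \Inc \setminus L$. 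Hence $\bv$ lies in $\cone(\Inc \cap L)$, and since every $\bu \in \Inc \cap L$ satisfies $-\bu \in \cone(\Inc)$ by definition of $L$, we conclude $-\bv \in \cone(\Inc)$. The converse direction is immediate from $\bn \in \Normals(\A)$, since both $\bv$ and $-\bv$ in $\cone(\Inc)$ yield $\bv \cdot \bn \leq 0$ and $-\bv \cdot \bn \leq 0$.

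To build such an $\bn$, I would produce one normal $\bn_{\bv}$ per vector $\bv \in \Inc \setminus L$ and then sum them. Fix any $\bv \in \Inc \setminus L$: since $-\bv \notin \cone(\Inc)$ and this finitely generated cone is closed, Farkas' lemma supplies $\bw_{\bv} \in \R^d$ with $\bu \cdot \bw_{\bv} \leq 0$ for every $\bu \in \Inc$ and $\bv \cdot \bw_{\bv} < 0$. Condition~(C) gives a $\bn_0 > \vec{0}$ in $\Normals(\A)$, and for any $\varepsilon > 0$ small enough to preserve strict positivity, the perturbation $\bn_0 + \varepsilon \bw_{\bv}$ still lies in $\Normals(\A)$ and satisfies $\bv \cdot (\bn_0 + \varepsilon \bw_{\bv}) \leq \varepsilon(\bv \cdot \bw_{\bv}) < 0$. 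Rescaling yields $\bn_{\bv} \geq \vec{1}$ with the same properties. Setting $\bn = \sum_{\bv \in \Inc \setminus L} \bn_{\bv}$, closure of $\Normals(\A)$ under nonnegative combinations gives $\bn \in \Normals(\A)$; each component of $\bn$ is at least $|\Inc \setminus L|$, so $\bn > \vec{0}$; and for each $\bv \in \Inc \setminus L$ we obtain $\bv \cdot \bn \leq \bv \cdot \bn_{\bv} < 0$, because all remaining summands contribute nonpositively to $\bv \cdot \bn$.

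For the algorithmic bound, Lemma~\ref{lem-inc-compute} computes $\Inc$ in time polynomial in $|Q|$ for fixed $d$. Testing whether $-\bv \in \cone(\Inc)$ is a linear feasibility query of size polynomial in $|\Inc|$, and each $\bn_{\bv}$ can be obtained directly by solving the linear program with constraints $\bu \cdot \bn \leq 0$ for $\bu \in \Inc$, $\bn \geq \vec{1}$, and $\bv \cdot \bn \leq -1$; feasibility of this LP is guaranteed by the perturbation argument above and is solvable in polynomial time. Since $|\Inc|$ is polynomial in $|Q|$ for fixed $d$, the overall construction runs in polynomial time. The conceptual crux of the proof, and where condition~(C) is indispensable, is the perturbation step: without a strictly positive seed $\bn_0$ to tilt, the raw Farkas separators $\bw_{\bv}$ may well have negative components, and without the strict inequalities on $\Inc \setminus L$ the resulting $\bn$ would fail to distinguish the lineality space of $\cone(\Inc)$ from the rest of the cone in the defining iff condition of a good normal.
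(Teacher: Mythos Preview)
Your argument is essentially the paper's: both proofs reduce the good-normal condition from $\cone(\Inc)$ to $\Inc$, compute the set $I=\{\bv\in\Inc\mid -\bv\in\cone(\Inc)\}=\Inc\cap L$, and then seek a positive $\bn\in\Normals(\A)$ that is strictly negative on $\Inc\setminus I$. The paper establishes existence by a minimality argument (pick $\bn>\vec{0}$ with a minimal zero set on $\Inc$, then Farkas-perturb a single witness to derive a contradiction), while you build $\bn$ constructively as a sum of per-vector perturbations; the paper's algorithm packs everything into a single LP that maximises a slack $\varepsilon$, whereas you solve one feasibility LP per $\bv\in\Inc\setminus L$ and sum the results. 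Both routes are valid and yield the same polynomial bound.

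There is one genuine gap in your construction: when $\Inc\setminus L=\emptyset$ (equivalently $\cone(\Inc)$ is a linear subspace, which is compatible with~(C); take e.g.\ $\Inc=\{(1,-1),(-1,1)\}$ and $\bn_0=(1,1)$), your sum $\bn=\sum_{\bv\in\Inc\setminus L}\bn_{\bv}$ is the zero vector, which is not a normal at all. The sentence ``each component of $\bn$ is at least $|\Inc\setminus L|$, so $\bn>\vec{0}$'' silently assumes $|\Inc\setminus L|\geq 1$. The fix is trivial---return $\bn_0$ in this case, or simply set $\bn=\bn_0+\sum_{\bv\in\Inc\setminus L}\bn_{\bv}$ throughout---but as written the proof does not cover it.
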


The next theorem is the key result of this section. It allows to reduce the analysis of termination complexity of a given VASS to the analysis of several smaller instances of the problem, which can be then solved recursively.

\begin{theorem}
\label{thm-A-decomp}
	Let $\A$ be a VASS satisfying~(C), and let $\bn \in \Normals(\A)$ be a good normal. Consider a VASS $\A^{\bn} = (Q,T_{\bn})$ where
	\[
	   T_{\bn} = \{t \in T \mid \text{ there is a short cycle $\gamma$ of $\A$ containing $t$ such that } \eff(\gamma) \cdot \bn = 0 \}\,.
	\] 
	Further, let $C_1,\ldots,C_k$ be all SCC of $\A^{\bn}$ with at least one transition. We have the following:
	\begin{itemize}
		\item[(1)] If $k=0$ (i.e., if there is no SCC of  $\A^{\bn}$ with at least one transition), then $\term_{\A}(n) \in \Theta(n)$.
		\item[(2)] If $k>0$, all $\A^{\bn}_{C_1},\ldots,\A^{\bn}_{C_k}$ are terminating, and the termination complexity of every $\A^{\bn}_{C_i}$ is $\Theta(f_i(n))$,
		then $\A$ is terminating and $\term_{\A}(n) \in \Theta(n \cdot \max[f_1,\ldots,f_k](n))$, where $\max[f_1,\ldots,f_k] : \N \rightarrow \N$ is a function defined by $\max[f_1,\ldots,f_k](n) = \max\{f_1(n),\ldots,f_k(n)\}$.
	\end{itemize}
\end{theorem}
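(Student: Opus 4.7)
The plan is to prove matching upper and lower bounds; part~(2) then follows from these bounds, and part~(1) will fall out as a degenerate case of the upper-bound argument. For the upper bound, consider a zero-avoiding computation $\alpha$ initiated in $p\bv$ with $\size{p\bv}=n$. Since every cycle has non-positive $\bn$-effect, the $\bn$-value stays in $[0, O(n)]$ throughout $\alpha$, and strict positivity of the good normal $\bn>\vec{0}$ then forces $\size{\bz}=O(n)$ at every visited configuration $q\bz$. I then decompose $\alpha$ via $\Decomp$ into short cycles plus a residue of length at most $|Q|-1$, and classify the short cycles as \emph{negative} (with $\eff\cdot\bn$ at most some strictly negative constant $-\delta<0$) or \emph{zero} (with $\eff\cdot\bn=0$). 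The total $\bn$-drop being $O(n)$ bounds the number of negative short cycles by $O(n)$, contributing $O(n)$ to $|\alpha|$.

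The key observation is that any transition $t\notin T_{\bn}$ cannot appear in a zero short cycle (by the very definition of $T_{\bn}$), so such transitions can only occur inside negative short cycles or in the residue; in total there are $O(n)$ non-$T_{\bn}$ transitions in $\alpha$. Thus $\alpha$ splits into $O(n)$ maximal runs of $T_{\bn}$-transitions, each of which is a zero-avoiding computation in $\A^{\bn}$ from a configuration of size $O(n)$. Any such run visits the SCCs of $\A^{\bn}$ in DAG order, entering each at most once. In case~(1) there are no non-trivial SCCs, so every run has length less than $|Q|$ and $|\alpha|=O(n)$; in case~(2), writing $F:=\max[f_1,\ldots,f_k]$, the time spent inside any $C_i$ is at most $f_i(O(n))=O(F(n))$ and the transient part is $O(|Q|)$, giving per-run length $O(F(n))$ and the upper bound $|\alpha|=O(n\cdot F(n))$. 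In particular $\A$ is terminating.

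For the lower bound, $\term_\A(n)\in\Omega(n)$ is standard, so it remains to prove $\term_\A(n)\in\Omega(n\cdot F(n))$ in case~(2). Picking $j^*$ with $f_{j^*}=F$, for sufficiently large $N$ I fix a zero-avoiding computation $\beta$ in $\A^{\bn}_{C_{j^*}}$ of length $\Omega(F(N))$ starting at some $q\bw$ with $\size{q\bw}=O(N)$. Using strong connectedness of $\A$ and the good-normal property---namely that any $\bv\in\cone{}(\Inc)$ with $\bv\cdot\bn=0$ satisfies $-\bv\in\cone{}(\Inc)$---the plan is to construct a cycle $\Pi$ of $\A$ through $q$ that contains $\beta$ as a sub-path, has all coordinate effects bounded in absolute value by a constant, and has $\bn$-effect bounded above by a strictly negative constant. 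Iterating $\Pi$ roughly $\Theta(n)$ times from an initial configuration of size $\Theta(n)$ then yields the required computation of length $\Omega(n\cdot F(n))$. The main obstacle will be this cycle construction, in particular guaranteeing that counter values remain non-negative through all $\Theta(n)$ iterations: the good-normal property is indispensable here, since it allows the in-hyperplane part of $\eff(\beta)$ (i.e.\ its component within $\{\bx\mid\bx\cdot\bn=0\}$) to be cancelled by a rational---and, after scaling, integer---combination of short cycles from $\Inc$, while the off-hyperplane part is charged once per iteration against the $\bn$-budget of size $\Theta(n)$.
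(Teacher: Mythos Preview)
Your upper-bound argument is correct and essentially identical to the paper's: bound the $\bn$-value along $\alpha$, deduce that all visited configurations have size $O(n)$, bound the number of $\bn$-decreasing short cycles (hence the number of non-$T_\bn$ transitions) by $O(n)$, and bound each maximal $T_\bn$-run via the SCC structure of~$\A^\bn$.

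Your lower-bound plan has the right shape but contains two genuine gaps.

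First, the ``in-hyperplane part'' of $\eff(\beta)$, read as the orthogonal projection onto $\{\bx\mid\bx\cdot\bn=0\}$, need not lie in $\cone(\Inc)$, so the good-normal property (which only speaks about vectors of $\cone(\Inc)$) does not apply to it. The paper circumvents this by first arranging $\beta$ to be a \emph{cycle} of $\A^\bn_{C_{j^*}}$ (a long zero-avoiding computation must contain a long cycle), and then proving the non-obvious fact that \emph{every} cycle of $\A^\bn_{C_i}$ is $\bn$-neutral. With this, $\eff(\beta)$ itself lies in $\cone(\Inc)\cap\{\bn\}^\perp$, and the good-normal property yields $-\eff(\beta)\in\cone(\Inc)$ directly---no projection and no ``off-hyperplane part'' is needed.

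Second, and more seriously, knowing $-\eff(\beta)\in\cone(\Inc)$ only controls the \emph{net} effect of the compensating path; it says nothing about the intermediate counter values while the compensating short cycles are being executed. Since $\|\eff(\beta)\|$ can be $\Theta(N)$, a careless ordering of these cycles---or a non-minimal decomposition with inflated coefficients---can drive some counter far below zero mid-compensation, destroying zero-avoidance. The paper isolates exactly this issue in a separate technical lemma (Lemma~\ref{lem-kappa}): for every $\bn$-neutral $\bw\in\cone(\Inc)$ of unit norm one can write $\bw=\sum_j a_j\bv_j$ with $\bv_j\in\Inc$, $\bv_j\cdot\bn=0$, and every partial sum $\sum_{j\le k'}a_j\bv_j$ bounded componentwise by a constant $\kappa$ depending only on~$\A$. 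Scaling, this bounds the intermediate excursions during compensation by $\kappa\cdot\|\eff(\beta)\|$, which is what permits choosing $N=\Theta(n)$ (concretely $n'=(n-|Q|)/(1+\kappa\sqrt d)$) so that the whole cycle $\Pi$ is executable and can be iterated $\Omega(n)$ times. Your plan names the obstacle but does not supply this ingredient; the good-normal property alone is not enough.
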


To get some intuiting behind the proof of Theorem~\ref{thm-A-decomp}, consider the following example.

\begin{example}
Consider the VASS of Fig.~\ref{fig:comp-ex1}. As mentioned in Example~\ref{ex:good_normal}, there is a good normal $\bn = (1, 1)$, which gives $T_{\bn}=\{(-1, 1), (1, -1)\}$. Then Case~(2) of 
Theorem~\ref{thm-A-decomp} gives us two simpler VASS $\A^{\bn}_{C_1},\A^{\bn}_{C_2}$ where 
$\A^{\bn}_{C_1}$ has a single state $q_1$ and a single transition $(q_1, (-1, 1), q_1)$, and $\A^{\bn}_{C_2}$ has a single state $q_2$ and a single transition $(q_2, (1, -1), q_2)$. Observe that both $\A^{\bn}_{C_1}$ and $\A^{\bn}_{C_2}$ can now be considered individually, and both of them have linear complexity. Also, as mentioned in Example~\ref{ex:good_normal}, the good normal makes sure that the effect of the worst case behavior in $\A^{\bn}_{C_1}$ can be compensated by a path in $\A^{\bn}_{C_2}$, and vice versa. Moreover, following the worst case path in $\A^{\bn}_{C_1}$ and its compensation in $\A^{\bn}_{C_2}$ decreases the final $\bn$-value of configurations only by a constant (caused by the switch between $\A^{\bn}_{C_1}$ and $\A^{\bn}_{C_2}$). So, we can follow such ``almost compensating'' loop $\Omega(n)$ times, and obtain a path of quadratic length.

Note that in the general case the situation is more complicated since the compensating path may need to be composed using paths in several VASS of $\A^{\bn}_{C_1},\ldots,\A^{\bn}_{C_k}$. So, we need to be careful about the number of switches and about geometry of the compensating path.
\end{example}

\begin{proof}[Proof sketch for Theorem~\ref{thm-A-decomp}]
	Claim~(1) follows easily. It suffices to realize that if there is no SCC of $\A^{\bn}$ with at least one transition, then there is no $\bv \in \Inc$ satisfying $\bv \cdot \bn = 0$. Hence, $\bv \cdot \bn < 0$ for all $\bv \in \Inc$, and we can apply Theorem~\ref{thm-linear}. 
	
	Now we prove Claim~(2). Let $\alpha$ be a zero-avoiding computation of $\A$ initiated in a configuration $q\bu$. Since the last configuration $p\bv$ of $\alpha$ satisfies $\bv \geq \vec{0}$, we have that $\bv \cdot \bn \geq 0$. Hence, 
	\[
	  \bv \cdot \bn \quad = \quad (\bu + \eff(\pi_\alpha)) \cdot \bn \quad = \quad \bu \cdot \bn +  \eff(\pi_\alpha) \cdot \bn \quad \geq \quad 0 \,.
	\]
	Let $\Decomp(\pi_\alpha)$ by a decomposition of $\pi_\alpha$ into short cycles. For every short cycle $\gamma$ of $\A$ we have that $\eff(\gamma) \cdot \bn \leq 0$. Since $\pi_\alpha$ can contain at most $|Q|$ transitions which are not contained in any cycle, we have that $\bu \cdot \bn \leq \bv \cdot \bn + c$, where $c \in \N$ is some fixed constant. This means that $\size{p\bv}$ is $\bigO(\size{q\bu})$. Consequently, the same holds also for all \emph{intermediate} configurations visited by~$\alpha$. 
	
	A short cycle $\gamma$ of $\A$ such that $\eff(\gamma) \cdot \bn < 0$ is called \emph{$\bn$-decreasing}, otherwise it is \emph{$\bn$-neutral}. Clearly, the total number of $\bn$-decreasing short cycles in $\Decomp(\pi_\alpha)$ is $\bigO(\size{q\bu})$, because each of them decreases the scalar product with $\bn$ by a fixed constant bounded away from zero, and $\bu \cdot \bn$ is $\bigO(\size{q\bu})$.  This means that the total number of transitions in $\pi_\alpha$ which are \emph{not} in $T_\bn$ is $\bigO(\size{q\bu})$ (as we already noted, $\pi_\alpha$ can also contain transitions which are not contained in any short cycle, but their total number is bounded by $|Q|$). Let $\varrho$ be a subpath of $\pi_\alpha$ with maximal length containing only transitions of $T_{\bn}$. Note that $\varrho$ is a concatenation of at most $|Q|$ subpaths which contain transitions of the same SCC $C_i$ of $\A^{\bn}$. Each of these subpaths is initiated in a configuration of size $\bigO(\size{q\bu})$, and hence its length is $\bigO(f_i(\size{q\bu}))$. Hence, the length of $\varrho$ is $\bigO(\size{q\bu} \cdot\max[f_1,\ldots,f_k](\size{q\bu}))$.
	
	It remains to prove that $\term_{\A}(n) \in \Omega(n \cdot \max[f_1,\ldots,f_k](n))$. Let us fix some $i \leq k$. We prove that there exists a constant $a \in \R^+$  such that for all sufficiently large  $n$ there exists a zero-avoiding computation $\alpha_n$ of length at least $a \cdot n \cdot f_i(n)$ initiated in a configuration of size $n$. The construction of $\alpha_n$ is technically non-trivial, so we first explain the underlying idea informally.  A formal proof is given in Section~\ref{sec-A-decomp}.
	
	To achieve the length  $\Omega(n \cdot f_i(n))$, the computation $\alpha_n$ needs to execute $\Omega(n)$ paths of length $\Theta(f_i(n))$ ``borrowed'' from $\A^{\bn}_{C_i}$. The problem is that even after executing just one path $\pi$ of length $\Theta(f_i(n))$, some counters can have very small values, which prevents the executing of another path of length $\Theta(f_i(n))$. Therefore, we need to ``compensate'' the effect of $\pi$ and increase the counters. This is where we use the properties of a good normal. We can choose $\pi$ so that it forms a cycle in $\A^{\bn}_{C_i}$ (not necessarily a short one), and we prove that all cycles in $\A^{\bn}_{C_i}$ are $\bn$-neutral. From this we get $-\eff(\pi) \in \cone{}(\Inc)$, and hence the effect of  $\pi$ can be compensated by an appropriate combination of short cycles of $\A^{\bn}$. So, after executing $\pi$, we execute the corresponding ``compensating'' path, and this is repeated $\Omega(n)$ times. Note that we need to ensure that the compensating paths do not decrease the counters too much in intermediate configurations, and the compensation ends in a configuration which is sufficiently close to the original configuration where we started executing~$\pi$.
\end{proof}

Now we can formulate and prove the main result of this section. 

\begin{theorem}
\label{thm-poly-poly}
	Let $\A$ be a $d$-dimensional VASS satisfying~(C). The problem whether $\A$ is terminating is decidable in time polynomial in $|Q|$, assuming $d$ is a fixed constant. Further, if $\A$ is terminating, then $\term(n) \in \Theta(n^k)$, where $k \in \{1,\ldots,d\}$ is a constant computable in time polynomial in $|Q|$, assuming $d$ is a fixed constant.  
\end{theorem}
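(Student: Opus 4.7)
The plan is to prove the theorem by induction on $r = \dim(\lspan(\Inc(\A)))$, which lies in $\{0,1,\ldots,d\}$. At each recursive level we compute a good normal $\bn$ via Lemma~\ref{lem-good-normal} and invoke the decomposition supplied by Theorem~\ref{thm-A-decomp}, reducing the analysis of $\A$ to the analyses of strictly simpler sub-VASS $\A^{\bn}_{C_1},\ldots,\A^{\bn}_{C_k}$. The exponents returned by the recursive calls will be combined by Theorem~\ref{thm-A-decomp}(2), whose multiplication by $n$ at every level is exactly what yields the bound $k\leq d$ in the statement.

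Concretely, in the inductive step we first compute $\Inc$ via Lemma~\ref{lem-inc-compute} and a good normal $\bn$ via Lemma~\ref{lem-good-normal}, both in time polynomial in $|Q|$ for fixed $d$, and then case-split on whether $\lspan(\Inc)$ is contained in the hyperplane $H = \{\bx\mid \bx\cdot\bn = 0\}$. In the \emph{non-degenerate} case $\lspan(\Inc)\not\subseteq H$, we build $\A^{\bn}$ and extract its SCCs $C_1,\dots,C_k$ carrying at least one transition. If $k=0$, Theorem~\ref{thm-A-decomp}(1) directly yields $\term_\A(n)\in\Theta(n)$ and the algorithm returns exponent $1$. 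If $k>0$, each $\A^{\bn}_{C_i}$ is strongly connected by construction, and because all of its short cycles are $\bn$-neutral one has $\bn>\vec{0}$ in $\Normals(\A^{\bn}_{C_i})$, so $\A^{\bn}_{C_i}$ satisfies condition~(C); crucially, $\lspan(\Inc(\A^{\bn}_{C_i}))\subseteq \lspan(\Inc)\cap H$ is a proper subspace of $\lspan(\Inc)$, so its effective dimension is strictly below $r$ and the induction hypothesis applies. If any $\A^{\bn}_{C_i}$ is non-terminating, so is $\A$ (any witnessing infinite zero-avoiding run lifts to $\A$); otherwise each $\A^{\bn}_{C_i}$ has complexity $\Theta(n^{k_i})$ with $k_i\leq r-1$, and Theorem~\ref{thm-A-decomp}(2) yields $\term_\A(n)\in\Theta(n^{1+\max_i k_i})$ with $1+\max_i k_i\leq r\leq d$.

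The base case $r=0$ and the \emph{degenerate} subcase $\lspan(\Inc)\subseteq H$ are both handled by a non-termination argument. When $r=0$ every element of $\Inc$ is zero, so either $\Inc=\emptyset$ and Theorem~\ref{thm-A-decomp}(1) applies (linear complexity), or a short cycle with effect $\vec{0}$ exists and can be iterated forever from any configuration. In the degenerate subcase the defining property of the good normal forces $-\bv\in\cone(\Inc)$ for every $\bv\in\Inc$, so $\cone(\Inc)$ is in fact a linear subspace; consequently there exists a non-trivial non-negative integer combination $\sum b_i\bv_i=\vec{0}$ of short-cycle effects, which, using strong connectivity of $\A$, can be threaded through connecting paths into a closed path whose extraneous connecting effect is further compensated by additional short cycles. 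The resulting closed path has effect $\vec{0}$ and can be iterated indefinitely from any sufficiently large configuration, witnessing non-termination of $\A$. As for running time, each recursive invocation costs time polynomial in $|Q|$ for fixed $d$, the SCCs at each recursion level are disjoint subsets of $Q$, and the recursion depth is at most $d+1$, keeping the total running time polynomial in $|Q|$.

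The hard part will be the degenerate subcase. Turning the algebraic identity ``every cycle effect has its negative in $\cone(\Inc)$'' into an executable infinite zero-avoiding run in $\A$ requires the same careful construction as the quadratic lower bound in Theorem~\ref{thm-linear}(b): the compensating loops must be stitched together so that the resulting closed path has effect exactly $\vec{0}$ while keeping every intermediate counter strictly positive throughout the iteration. Once this non-termination step is in place, the remaining induction is a routine amalgamation of Lemma~\ref{lem-good-normal} and Theorem~\ref{thm-A-decomp}.
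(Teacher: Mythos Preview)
Your proposal is correct and follows essentially the same recursive scheme as the paper: compute a good normal, form $\A^{\bn}$, split into SCCs, and recurse via Theorem~\ref{thm-A-decomp}. Your explicit induction on $r=\dim(\lspan(\Inc))$ is a genuine improvement over the paper's presentation, because it makes the bound $k\leq d$ transparent; the paper merely asserts ``Clearly, we obtain $\term(n)\in\Theta(n^k)$ for some $k\in\{1,\ldots,d\}$'' without spelling out why the recursion depth is bounded by~$d$.

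The one place where you take a harder route is the degenerate subcase $\lspan(\Inc)\subseteq H$. Your plan---turning the identity $-\bv\in\cone(\Inc)$ for all $\bv\in\Inc$ into an executable zero-effect cycle by stitching together short cycles and connecting paths---is workable (indeed $\cone(\Inc)$ becomes a rational linear subspace, so $-\eff(\pi)$ for a spanning cycle $\pi$ admits a non-negative rational, hence after scaling integral, representation in $\Inc$), but as you yourself note it requires the same care as the lower-bound construction in Theorem~\ref{thm-linear}(b). The paper sidesteps this entirely with a one-line indirect argument: the degenerate case is exactly the case $k=1$ and $\A^{\bn}_{C_1}=\A$, and if $\A$ were terminating with complexity $\Theta(f(n))$ then Theorem~\ref{thm-A-decomp}(2) applied to $\A$ itself would give $\Theta(f(n))=\Theta(n\cdot f(n))$, a contradiction. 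Adopting this shortcut removes what you correctly identified as the hard part, while keeping the rest of your dimension-based induction intact. (A small correction in your base case: a short cycle with effect $\vec{0}$ can be iterated forever only from configurations with all counters sufficiently large, not ``from any configuration'', since intermediate steps of the cycle may temporarily decrease counters.)
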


\begin{proof}
	For a given a $\A$, the algorithm starts by computing a good normal $\bn$ (see Lemma~\ref{lem-good-normal}) and constructing the VASS $\A^{\bn} = (Q,T_{\bn})$ of Theorem~\ref{thm-A-decomp}. Here, the set $T_\bn$ is computed as follows. Note that $\A$ can be seen as a directed multigraph where the nodes are the states and the edges correspond to transitions. To every transition $(q, \bu, q')$ we assign its weight $-\bu\cdot \bn$. Note that the multigraph does not contain any negative cycles (a negative cycle in the multigraph would induce a cycle in $\A$ increasing the $\bn$-value; however, such a cycle cannot exist with a good normal $\bn$). To decide whether a given transition $(q, \bu, q')$ belongs to $T_{\bn}$, it suffices to find a~path with the least accumulated weight from $q'$ to $q$ (which can be done using, e.g., Bellman-Ford algorithm~\cite{Shimbel:1951}) and check whether the accumulated weight is equal to $\bu\cdot \bn$. Hence, $T_\bn$ is computable in time polynomial in the size of $\A$ (for a given good normal $\bn$).
	
	Then, the algorithm proceeds by constructing the SCC $C_1,\ldots,C_k$ of $\A^{\bn}$. If $k=0$, then $\term(n) \in \Theta(n)$ (see Theorem~\ref{thm-A-decomp}~(1)). If $k=1$ and $\A^{\bn}_{C_1} = \A$, then $\A$ is non-terminating (this is a consequence of Theorem~\ref{thm-A-decomp}~(2); if $\A^{\bn}_{C_1}$ was terminating with termination complexity $\Theta(f_1(n))$, then by Theorem~\ref{thm-A-decomp}~(2), the termination complexity of $\A = \A^{\bn}_{C_1}$ is $\Theta(n \cdot f_1(n))$, which is impossible). Otherwise, the algorithm proceeds by analyzing $\A^{\bn}_{C_1},\ldots,\A^{\bn}_{C_k}$ recursively. If some of them is non-terminating, then $\A$ is also non-terminating. Otherwise, the termination complexity of $\A$ is derived from the termination complexity of $\A^{\bn}_{C_1},\ldots,\A^{\bn}_{C_k}$ as in Theorem~\ref{thm-A-decomp}~(2). Clearly, we obtain $\term(n) \in \Theta(n^k)$ for some $k \in \{1,\ldots, d\}$. It is easy to verify that the total the number of recursive calls is polynomial in the size of~$\A$.
\end{proof}

\subsection{VASS satisfying condition~(D)}

Condition~(D) is not sufficiently strong to guarantee polynomial termination 
time for terminating VASS. In fact, as $d$ increases, the termination 
complexity can grow \emph{very} fast. 
Even for $d=3$, one can easily construct a terminating VASS satisfying~(C) such 
that $\term(n) \in \Omega(2^n)$. 

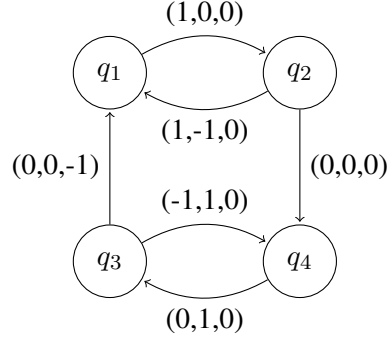
\begin{figure}
	\centering
	\begin{tikzpicture}[shorten >=1pt,node distance=2.5cm,on grid,auto]
	\node[state] (q_1)   {$q_1$};
	\node[state] (q_2) [right=of q_1] {$q_2$};
	\node[state] (q_3) [below=of q_1] {$q_3$};
	\node[state] (q_4) [right=of q_3] {$q_4$};
	\path[->]
	(q_1) edge [bend left] node {(1,0,0)} (q_2)
	(q_2) edge [bend left] node  {(1,-1,0)} (q_1)
	edge node {(0,0,0)} (q_4)
	(q_3) edge [bend left] node {(-1,1,0)} (q_4)
	edge node {(0,0,-1)} (q_1)
	(q_4) edge [bend left] node  {(0,1,0)} (q_3)
	;
	\end{tikzpicture}
	\caption{A 3-dimensional VASS satisfying condition (D) which has an exponential 
		termination complexity.}
	\label{fig:exponential}
\end{figure}

\begin{example}
	\label{ex:exponential}
	Consider the strongly connected 3-dimensional VASS $\A$ in 
	Fig.~\ref{fig:exponential}. Let $n\in\N$ be arbitrary. We construct a 
	zero-avoiding computation $\alpha(n)$ starting in $q_1\vec{n}$ whose length is 
	exponential in $n$. For better readability, denote by $x$, $y$, and $z$ the 
	variables representing the first, second, and third counter, respectively.
	
	The construction consist of iterating several phases. In Phase (a) we iterate 
	the 
	short cycle $q_1,(1,0,0),q_2,(1,-1,0),q_1$ as long as $y\geq 2$. Then we 
	perform the path $q_1,(1,0,0),q_2,(0,0,0),q_4$ to $q_4$. From there we continue 
	with Phase (b), where we iterate the short cycle $q_4,(0,1,0),q_3,(-1,1,0),q_4$ 
	as long as $x\geq 2$. After this we perform the path 
	$q_4,(0,1,0),q_3,(0,0,-1),q_1$ to $q_1$. There we again switch to Phase (a), 
	repeating the process until one of the counters hits zero.
	
	One can straightforwardly check that the total effect of performing Phase (a) 
	once is setting $y$ to 1 while setting $x$ to $x_a+2y_a$, where $x_a,y_a$ are 
	the 
	values of $x,y$ before the start of the phase. Similarly, The total effect of 
	performing Phase (b) once is setting $x$ to $1$ while setting $y$ to $y_b 
	+2x_b$, 
	where $x_b,y_b$ are the values of $x,y$ before the start of the phase. Hence, 
	the 
	total 
	effect of consecutively performing Phases (a) and (b) once can be bounded from 
	below as follows: setting $x$ to 
	$1$ and multiplying $y$ by $4$. Hence, the total effect of performing $N$ 
	consecutive iterations of Phases (a) and (b) is setting $x$ to $1$, multiplying 
	$y$ by $4^N$ and decreasing $z$ by $N$. Since $z$ decreases exactly during the 
	witch from Phase (b) to Phase (a), we can perform exactly $n$ consecutive 
	iterations of (a) and (b). But increasing $y$ from $n$ to $4^n$ requires at 
	least $4^n - n$ steps in VASS, hence the termination complexity of $\A$ is at 
	least exponential. The matching asymptotic upper bound is easy to get.
\end{example}

The key idea of the previous example can be used as building block for showing 
that higher-dimensional terminating VASS satisfying~(D) can have even larger 
termination complexity
than exponential. Already in dimension~4, the complexity can be non-elementary.

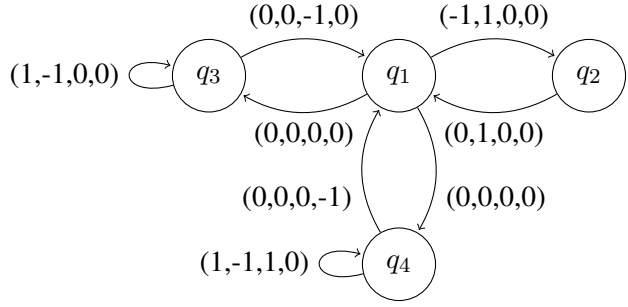
\begin{figure}
	\centering
	\begin{tikzpicture}[shorten >=1pt,node distance=2.5cm,on grid,auto]
	\node[state] (q_1)   {$q_1$};
	\node[state] (q_2) [right=of q_1] {$q_2$};
	\node[state] (q_3) [left=of q_1] {$q_3$};
	\node[state] (q_4) [below=of q_1] {$q_4$};
	\path[->]
	(q_1) edge [bend left] node {(-1,1,0,0)} (q_2)
	edge [bend left] node {(0,0,0,0)} (q_3)
	edge [bend left, pos=0.55] node {(0,0,0,0)} (q_4)
	(q_2) edge [bend left] node  {(0,1,0,0)} (q_1)
	(q_3) edge [bend left] node {(0,0,-1,0)} (q_1)
	edge [loop left] node {(1,-1,0,0)} ()
	(q_4) edge [bend left, pos=0.45] node {(0,0,0,-1)} (q_1)
	edge [loop left] node {(1,-1,1,0)} ()
	;
	\end{tikzpicture}
	\caption{A 4-dimensional VASS satisfying condition (D) which has a 
		non-elementary termination complexity.}
	\label{fig:nonelem}
\end{figure}

\begin{example}
	\label{ex:nonelem}
	Consider the 4-dimensional strongly connected VASS in Fig.~\ref{fig:nonelem}. 
	As before, we denote by $x,y,z,w$ the individual counters. 
	
	For $n\in\N$ we construct a zero-avoiding computation $\alpha(n)$ started in 
	$q_1\vec{n}$ 
	whose length in non-elementary.
	The construction again proceeds by switching 
	between various phases and the phases we consider are the following: in Phase 
	(a) we iterate cycle $q_1,(-1,1,0,0),q_2,(0,1,0,0),q_1$ from $q_1$ as long as 
	$x\geq 2$. The effect of a single execution of (a) is setting $x$ to $1$ and 
	$y$ to $y_a+2x_a$ (as before $v_{p}$ denotes the value of counter $v$ at the 
	start 
	of phase (p)). In Phase (b) we iterate the self-loop on $q_3$ as long as 
	$y\geq 2$, the effect of the phase is setting $y$ to $1$ and $x$ to 
	$x_b+y_b$. 
	Phase~(c) consists of iterating the self-loop on $q_4$ as long as $y\geq 2$ and 
	the effect is setting $y$ to $1$ and $x$ and $z$ to $x_c + y_c$ and 
	$z_c+y_c$, respectively. Switching from (b) to (a) or (c) decreases $z$ by 
	1, 
	while 
	switching from (c) to (a) or (b) decreases $w$ by 1. Now the construction of 
	$\alpha(n)$ proceeds as follows: we switch between Phases (a) and (b) as long 
	as $z\geq 2$, after which we perform Phase (a) once more. We call this a Phase 
	(d) and the total effect of (d) is setting 
	$x$ and $z$ to $1$, and $y$ to a number at least $4^{z_d}\cdot x_d$. After 
	Phase (d) we go to $q_4$ and execute Phase (c), after which we go to $q_1$ and 
	start (d) again, repeating the process until a configuration with a zero 
	counter is hit. The total effect of a single consecutive execution of (d) and 
	(c) is setting $y$ to $1$ and $x$ and $z$ to a number at least $ 
	2^{z_d}\cdot 
	x_d$. Since $w$ is only decremented when switching from (d) to (c), we can 
	repeat this consecutive execution at least $n$ times. An easy induction shows 
	that after $i$ repeats of the consecutive executions of (d) and (c) the value 
	of $x$ is at least $$\xi_n:= 
	n\cdot 2
	\underbrace{
		{{{^{2\vphantom{h}}}^{2\vphantom{h}}}^{\cdots\vphantom{h}}}^{2^n\vphantom{h}}
	}_{\text{$n$ times}}
	. $$
	Hence, the length of $\alpha(n)$ is at least $\xi_n$, i.e. non-elementary.
\end{example}

Figure~\ref{fig:exponential} also provides an example showing that 
lexicographic 
ranking functions are not sound for polynomial bounds on termination 
complexity. We first define the notion of lexicographic ranking function for 
VASS: we specialize the standard definition of a lexicographic ranking 
functions 
for affine automata~\cite{ADFG10:lexicographic-ranking-flowcharts} (a 
generalization of VASS which models general linear 
arithmetic programs). Formally, an $m$-dimensional lexicographic map for a VASS 
$\A=(Q,T)$ is a 
collection $\{f_q^j\mid q \in Q,1\leq j \leq m\}$ of linear functions of 
counter 
values, one function per state and $1\leq j \leq m$ (we allow $m$ to be 
different from the dimension $d$ of $\A$). A lexicographic map $\{f_q^j\mid 
q \in Q,1\leq j \leq m\}$ is a lexicographic $\epsilon$-ranking function for  
$\A$ 
if each 
$f_q^j$ is bounded from below on $\N$ and for 
each 
transition $(q,\bu,q')$ of $\A$ there exists $1\leq j \leq m$ such that 
$f_{q'}^{j}(\bu)\leq f^j_{q}(\vec{0}) - \epsilon$ and for all $1\leq j'<j$ it 
holds 
$f_{q'}^{j'}(\bu)\leq f^{j'}_{q}(\vec{0})$. A standard argument shows that if 
$\A$ has a lexicographic $\epsilon$-ranking function for, then it is 
terminating. 
However, 
lexicographic ranking functions are not sound for polynomial complexity bounds.

\begin{example}
\label{ex:lex}
Consider the VASS $\A$ in Figure~\ref{fig:exponential}. Then, denoting the first, 
second, and third counter as $x,y,z$, respectively, there is the following 
3-dimensional lexicographic $\frac{1}{2}$-ranking function for $\A$ (we denote 
$f_q = 
(f^1_q,\dots,f^m_q)$): $f_{q_1}=(z,y,x)$, $f_{q_2}=(z,y-\frac{1}{2},x)$, 
$f_{q_4}=(z-\frac{1}{2},x,y)$, $f_{q_3}=(z-\frac{1}{2},x-\frac{1}{2},y)$. But 
as shown in Example~\ref{ex:exponential}, the VASS has exponential termination 
complexity.
\end{example}

\section{Technical Proofs}
\label{sec-tech-proofs}

\subsection{Proof of Theorem~\ref{thm-linear-polytime}}

We describe a polynomial time-algorithm for deciding 
whether a given VASS has linear termination complexity.
Recall from the proof sketch that it suffices to solve an equivalent problem whether there is an open half-space $\calH_{\bn}$ of $\R^d$ such that $\bn \geq \vec{0}$ and  $\Inc \subseteq \calH_{\bn}$.
    
Let us formalize our intuition presented in the proof sketch. We need to introduce some additional notation: An~\emph{infinite path} $\pi$ is an infinite sequence of the form $p_0,\bu_1,p_1,\bu_2,p_2,\ldots$ where for each $n\geq 1$ the finite subsequence $p_0,\bu_1,p_1,\bu_2,p_2,\ldots,\bu_n,p_n$ is a finite path. We denote by $\pi_{\downarrow n}$ the finite prefix $p_0,\bu_1,p_1,\bu_2,p_2,\ldots,\bu_n,p_n$ of $\pi$.
    Given an infinite path $\pi$, we define the \emph{mean change of $\bn$-value} as
    \[
    \mathit{MC}_{\bn}(\pi) = \liminf_{n\rightarrow \infty} \frac{\eff(\pi_{\downarrow n})\cdot \bn}{n}.
    \]
    Consider the following linear program $\mathcal{L}$ obtained 
    from~\cite{Puterman:1994}, Section~8.8, by substituting the reward $r(s, 
    a)$ with $\bu\cdot \bn$ where $\bu$ is an effect of a transition: 
    
    {\itshape {\bfseries Minimize} $g$ with respect to the following constraints:
    
    For all $(q, \bu, q')\in T$
    \[
    g + h(q) - h(q') \geq \bu\cdot \bn
    \]
    and
    \[
    \bn\geq \vec{0}.
    \]}
    Here, the variables are $g$, all $h(q)$, $q\in Q$, and all components of $\bn$. By applying the results of \cite{Puterman:1994}, for every optimal solution $g, h, \bn$ we have that 
    \[
    	g = \sup_{\pi} \mathit{MC}_{\bn}(\pi).
    \]
    Moreover, there is at least one feasible solution.
   
    We prove that there is $\bn\geq \vec{0}$ such that the open half-space $\calH_{\bn}$ contains $\Inc$ iff an optimal solution $g, h, \bn$ of the above program satisfies $g<0$.
    
    Consider an optimal solution $g, h, \bn$ of the above program. Assume that $g<0$. We show that $\Inc \subseteq \calH_{\bn}$. 
    For the sake of contradiction, assume that there is a short cycle $\pi$ such that $\eff(\pi)\cdot \bn\geq 0$. Following the cycle $\pi$ ad infinitum determines an infinite path $\pi$ with $\mathit{MC}_{\bn}(\pi)\geq 0$. However, this contradicts the fact that
    $0 > g = \sup_{\pi} \mathit{MC}_{\bn}(\pi)$.
    
    Now assume there is $\bn\geq \vec{0}$ such that $\Inc \subseteq  \calH_{\bn}$. Let $\pi$ be an infinite path. Let us fix $n\geq 1$ and consider $\Decomp(\pi_{\downarrow n})$, the decomposition of $\pi_{\downarrow n}$ into short cycles.
    Let $\mathit{Rest}(\pi_{\downarrow n})$ be the remaining path obtained after removing all short cycles of $\Decomp(\pi_{\downarrow n})$ from $\pi_{\downarrow n}$. Note that the length of $\mathit{Rest}(\pi_{\downarrow n})$ is at most $|Q|$, and hence $\eff(\mathit{Rest}(\pi_{\downarrow n}))\cdot \bn\leq \vec{|Q|}\cdot \bn$. 
    
    Now let $m$ be the length (i.e., the number of elements) of the list $\Decomp(\pi_{\downarrow n})$. Note that \mbox{$m\geq n/|Q|-1$}.
    Consider $\varepsilon>0$ such that for all short cycles $\alpha$ we have that $\eff(\alpha)\cdot \bn\leq -\varepsilon$. Then 
    \[
    \eff(\pi_{\downarrow n})\quad \leq\quad m\cdot (-\varepsilon) + \vec{|Q|}\cdot \bn\quad \leq\quad (n/|Q|-1)\cdot (-\varepsilon) + \vec{|Q|}\cdot \bn \quad = \quad (n\cdot (-\varepsilon)/|Q|) +(\vec{|Q|}\cdot \bn +\varepsilon)
    \]
    and thus 
    \[
    \frac{\eff(\pi_{\downarrow n})}{n}\quad\leq \quad \frac{(n\cdot (-\varepsilon)/|Q|) +(\vec{|Q|}\cdot \bn +\varepsilon)}{n} \quad =\quad
    \frac{-\varepsilon}{|Q|} + \frac{(\vec{|Q|}\cdot \bn +\varepsilon)}{n}.
    \]
    Since $\lim_{n \rightarrow \infty}(\vec{|Q|}\cdot \bn +\varepsilon)/n = 0$, we obtain that $\mathit{MC}_{\bn}(\pi) \leq (-\varepsilon)/|Q|$. As $\pi$ was chosen arbitrarily, we have that
    \[
    \sup_{\pi} \mathit{MC}_{\bn}(\pi)\quad  \leq \quad \frac{-\varepsilon}{|Q|}\quad <\quad 0.
    \] 
    Hence, there is a solution $g, h, \bn$ of the above linear program with $g<0$.
    
    In order to decide whether there is an open half-space $\calH_{\bn}$ of 
    $\R^d$ such that $\bn \geq \vec{0}$ and  $\Inc \subseteq \calH_{\bn}$, it 
    suffices to compute an optimal solution $g, h, \bn$ of the above linear 
    program, which can be done in polynomial time (see, e.g., 
    \cite{Khachiyan:1979}), and check whether $g<0$. 
    
    Now we get back to weighted linear ranking functions. Note that each 
    solution 
    $g,h,\bn$ of the linear program $\mathcal{L}$ in which $g<0 $ yields a 
    weighted linear
    ranking 
    function $(\bc,\{h_q\mid q\in Q\})$ by putting $\bc:=\bn$ and $h_q:=h(q)$ 
    for each $q$. Conversely, each weighted linear ranking function yields a 
    solution of $\mathcal{L}$ where $g<0$, (we need to put $g:=-\epsilon$, 
    where $\epsilon$ is from the definition of a weighted lin. ranking 
    function). Hence, a 
    VASS $\A$
    has 
    linear termination complexity if and only if it has a weighted linear 
    ranking function and this can be decided in polynomial time in size of $\A$.
    \qed

\subsection{Proof of Theorem~\ref{thm-condAB}}
\label{sec-condAB}

If condition (A) or (B) holds, there is no $\bn \geq \vec{0}$ such that $\Inc \subseteq \hat{\calH}_{\bn}$. We show that then there exists $\bu \in \cone{}(\Inc)$ such that $\bu > \vec{0}$. Suppose there is no such $\bu$. Let $B$ be the set of all $\bv > \vec{0}$. Since $\cone{}(\Inc)$ and $B$ are convex and disjoint, there is a separating hyperplane with normal $\bn \geq \vec{0}$ for $\cone{}(\Inc)$ and $B$. Since $\cone{}(\Inc) \subseteq \hat{\calH}_{\bn}$, we have a contradiction.

So, let $\bu > \vec{0}$ such that $\bu = \sum_{i=1}^k a_i \cdot \bv_i$, where $a_i \in \Q^+$ and $\bv_i \in \Inc$ for all $1 \leq i \leq k$. Hence, there also exist $b_1,\ldots,b_k \in \N^+$ such that $\bw =  \sum_{i=1}^k b_i \cdot \bv_i > \vec{0}$. Let us fix a cycle $\pi$ in $\A$ visiting all control states (here we need that $\A$ is strongly connected). Clearly, there exists $c \in \N$ such that $\eff(\pi) + c\cdot \bw > 0$. Let $\varrho$ be a cycle obtained from $\pi$ by inserting $c\cdot b_i$ copies of a short cycle $\gamma_i$, where $\eff(\gamma_i) = \bv_i$. Then, $\eff(\varrho) > \vec{0}$, and hence there exists an infinite computation initiated in $p\vec{n}$ for a sufficiently large $n \in \N$ (the control state $p$ can be chosen arbitrarily). 

\subsection{Proof of Lemma~\ref{lem-good-normal}}
\label{sec-good-normal}

Due to condition~(C), there exists at least one positive normal. Hence, we can fix a positive $\bn \in \Normals(\A)$ such that the set $\{\bu \in \Inc \mid \bu \cdot \bn = 0 \}$ is \emph{minimal}. We show that for every $\bv \in \cone{}(\Inc)$ we have that $-\bv \in \cone{}(\Inc)$ iff $\bv \cdot \bn = 0$, i.e., $\bn$ is a good normal. The ``$\Rightarrow$'' direction immediate---if $\bv,-\bv \in \cone{}(\Inc)$, then $\bv \cdot \bn \leq 0$ and $-\bv \cdot \bn \leq 0$, which implies $\bv \cdot \bn = 0$. For the other direction, suppose there exists $\bv \in \cone{}(\Inc)$ such that $\bv \cdot \bn = 0$ and $-\bv \not\in \cone{}(\Inc)$. Then there also exists $\bu \in \Inc$ such that $\bu \cdot \bn = 0$ and $-\bu \not\in \cone{}(\Inc)$. For the rest of this proof, we fix such $\bu$. By Farkas' lemma, there exists a separating hyperplane for $\cone{}(\Inc)$ and $-\bu$ with normal vector $\bn'$, i.e., $-\bu \cdot \bn' > 0$ and $\bv \cdot \bn' \leq 0$ for every $\bv \in \cone{}(\Inc)$. Let us fix a sufficiently small $\varepsilon > 0$ such that $\bn + \varepsilon \bn' > \vec{0}$ and $\bv \cdot (\bn + \varepsilon \bn') < 0$ for all $\bv \in \Inc$ where $\bv \cdot \bn < 0$. Clearly, $\bn + \varepsilon \bn'$ is a positive normal. Further, for all $\bv \in \cone{}(\Inc)$ such that $\bv \cdot \bn < 0$ we have that $\bv \cdot (\bn + \varepsilon \bn') < 0$. Since $\bu \cdot (\bn + \varepsilon \bn') < 0$, we obtain a contradiction with the minimality of $\bn$.	

To compute a good normal, first observe that the condition of Definition~\ref{def-good-normal} can be safely relaxed just to the vectors of $\Inc$, i.e., if $\bn \in\Normals(\A)$ such that 
$\bn > \vec{0}$ and $-\bv\in \cone{}(\Inc)$ iff $\bv\cdot \bn=0$ for every $\bv\in \Inc$, then $\bn$ is a good normal. To see this, fix some $\bn$ with this property, and let $\bu = \sum_{i=1}^k a_i\cdot \bv_i$, where $a_i \in \R^+$ and $\bv_i \in \Inc$ for all $1 \leq i \leq k$. We need to show that $-\bu\in \cone{}(\Inc)$ iff $\bu \cdot \bn=0$. If $-\bu \in \cone{}(\Inc)$, then $-\bu = \sum_{i = 1}^{k'} a'_j \cdot \bv'_i$ where $a'_i \in \R^+$ and $\bv'_i \in \Inc$ for all $1 \leq i \leq k'$. Hence, 
\[
\vec{0} = \bu+ (-\bu) = \sum_{i =1}^k a_i\cdot \bv_i + \sum_{i =1}^{k'} a'_i \cdot \bv'_i .
\]
Hence,
\[
0 = (\bu+ (-\bu)) \cdot \bn = \sum_{i =1}^k a_i\cdot \bv_i \cdot \bn + \sum_{i =1}^{k'} a'_i \cdot \bv'_i \cdot \bn.
\]
Since $\bv_i \cdot \bn \leq 0$ and $\bv'_i \cdot \bn \leq 0$ for all $1 \leq i \leq k$ and all $1 \leq i' \leq k'$, we obtain $\bv_i \cdot \bn=0$ for all $1 \leq i \leq k$, hence $\bu\cdot \bn=0$. On the other hand, if $\bu \cdot \bn=0$, then $\sum_{i =1}^k a_i \cdot \bv_i \cdot \bn = 0$. Since  $a_i>0$ and $\bv_i\cdot \bn\leq 0$  for all $1 \leq i \leq k$, we have that $\bv_i \cdot \bn=0$ for all $1 \leq i \leq k$. Hence $-\bv_i\in \cone{}(\Inc)$ for every $1 \leq i \leq k$ (by our assumption), and $-\bu = \sum_{i=1}^k a_i \cdot (-\bv_i) \in \cone{}(\Inc)$. 

Using the above observation, we can compute a good normal using linear programming as follows: First, compute the set $I = \{\bv \in \Inc \mid -\bv \in \cone{}(\Inc)\}$. Note that $I$ can be computed easily by checking feasibility of the following linear constraints:
\[
-\bv = \sum_{\bu \in \Inc} a_{\bu} \cdot \bu \qquad \text{ and } \qquad a_{\bu} \geq 0.
\]
Here, the variables are $a_{\bu}$. A good normal can be computed using the following linear program:

{\itshape
	{\bfseries Maximize} $\varepsilon$ with respect to the following constraints:
\begin{gather*}
	\bu \cdot \bn = 0\text{ for all } \bu \in I\\
	\bv \cdot \bn \leq -\varepsilon \text{ for all } \bv \in \Inc \smallsetminus I \\
	\bn\geq \vec{\varepsilon}.
\end{gather*}
}
Here, the variables are $\varepsilon$ and all components of $\bn$.

Note that there is a good normal iff there is an optimal solution with $\varepsilon>0$. Moreover, every optimal solution $\varepsilon, \bn$ with $\varepsilon>0$ gives a good normal $\bn$.

\subsection{Proof of Theorem~\ref{thm-A-decomp}}
\label{sec-A-decomp}

Now start by formulating an auxiliary technical lemma which is needed in the proof of Theorem~\ref{thm-A-decomp}.

\begin{lemma}
	\label{lem-kappa}
	Let $\A$ be a VASS satisfying~(C), and let $\bn$ be a good normal. Then there is a constant $\kappa \in \R^+$ such that for every $\bw \in \cone{}(\Inc)$, where $\bw \cdot \bn = 0$ and $\Norm(\bw) = 1$, there exist $k \in \N$, $a_1,\ldots,a_k \in \R^+$, and $\bv_1,\ldots,\bv_k \in \Inc$ such that $\bw = \sum_{j=1}^k a_j \cdot \bv_j$, $\bv_j \cdot \bn = 0$ for all $1 \leq j \leq k$, and for all $k' \leq k$, the absolute values of all components of the vector $\sum_{j=1}^{k'} a_j \cdot \bv_j$ are bounded by~$\kappa$. 
\end{lemma}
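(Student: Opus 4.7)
The strategy is to show that any such $\bw$ actually lies in the cone generated by the set $I = \{\bv \in \Inc \mid \bv \cdot \bn = 0\}$ of hyperplane-hugging increments, then use Carath\'eodory's theorem to obtain a decomposition with few summands, and finally use a compactness argument to bound the coefficients uniformly in $\bw$.

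First I would identify $\{\bx \in \cone{}(\Inc) \mid \bx \cdot \bn = 0\}$ with $\cone{}(I)$. For the nontrivial inclusion, write any such $\bx$ as $\bx = \sum_{\bu \in \Inc} c_{\bu} \bu$ with $c_\bu \geq 0$; taking the scalar product with $\bn$ gives $0 = \sum_{\bu \in \Inc} c_\bu\,(\bu \cdot \bn)$, and since every summand is non-positive (because $\bn \in \Normals(\A)$), each must vanish. So $c_\bu > 0$ only for $\bu \in I$, giving $\bx \in \cone{}(I)$. In particular, the vector $\bw$ from the statement lies in $\cone{}(I)$. Observe that this step tacitly uses the ``good normal'' property only through $\bn \in \Normals(\A)$, so it applies to every $\bn$ covering $\Inc$.

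Next, I would invoke the conical Carath\'eodory theorem to obtain an expression $\bw = \sum_{j=1}^{k} a_j \bv_j$ with $\bv_1,\ldots,\bv_k \in I$ linearly independent (so $k \leq d$) and $a_j > 0$. Each $\bv_j$ automatically satisfies $\bv_j \cdot \bn = 0$ by membership in $I$. The order of the summands may be chosen arbitrarily.

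The main obstacle, and the only non-routine ingredient, is to bound the coefficients $a_j$ by a constant independent of $\bw$. For each linearly independent subset $B = \{\bu_1,\ldots,\bu_\ell\} \subseteq I$, the assignment sending $\bx \in \lspan(B)$ to its unique coordinate vector in the basis $B$ is a continuous linear map; on the compact unit sphere of $\lspan(B)$ the quantity $\sum_i |a_i|$ attains a maximum $C_B$. Since $I$ is finite, only finitely many such subsets $B$ exist, and $C := \max_B C_B$ is a universal constant yielding $a_j \leq C$ whenever $\Norm(\bw) = 1$. Finally, each $\bv_j \in \Inc$ has all components in $[-|Q|,|Q|]$, so for every $k' \leq k$ each component of $\sum_{j=1}^{k'} a_j \bv_j$ is bounded in absolute value by $\sum_{j=1}^{k'} a_j |Q| \leq k \cdot C \cdot |Q| \leq d\,C\,|Q|$. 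Taking $\kappa = d\,C\,|Q|$ completes the proof.
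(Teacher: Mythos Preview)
Your argument is correct and takes a genuinely different route from the paper's own proof. The paper does not invoke Carath\'eodory; instead it fixes a decomposition $\bw = \sum_{j=1}^k a_j \bv_j$ with \emph{minimal} $k$, uses minimality to show that no $-\bv_j$ lies in the cone of the remaining $\bv_i$'s, and from this (via the argument of Case~(a) in Lemma~\ref{lem-no-halfspace}) produces a strictly separating normal $\bn' > \vec{0}$ with $\bv_j \cdot \bn' < 0$ for all $j$. The coefficient bound then comes from a ``directional progress'' estimate: $\sum_j a_j$ is controlled by $|\bw \cdot \bn'|$ divided by $\min_j |\bv_j \cdot \bn'|$. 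Your approach replaces this hyperplane-separation step by pure linear algebra: Carath\'eodory gives linearly independent generators (hence $k \leq d$ for free), and the coefficient bound follows from continuity of the coordinate map on the unit sphere of each subspace $\lspan(B)$, maximized over the finitely many linearly independent $B \subseteq I$. Your route is more elementary and yields an explicit bound $k \leq d$; the paper's route stays closer to the hyperplane-separation theme running through the article. Both proofs, as you noticed, use only $\bn \in \Normals(\A)$ rather than the full good-normal property. One cosmetic remark: since $\sum_{j=1}^{k'} a_j \leq \sum_{j=1}^{k} a_j \leq C$, you could take $\kappa = C\,|Q|$ directly; the extra factor $d$ is harmless but unnecessary.
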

\begin{proof}
	Let $\bw = \sum_{j=1}^k a_j \cdot \bv_j$ where $a_j \in \R^+$, $\bv_j \in \Inc$ for all $1 \leq j \leq k$, and $k$ is \emph{minimal}. Clearly, $\bw  \cdot \bn = \sum_{j=1}^k a_j \cdot (\bv_j \cdot \bn) = 0$, which implies $\bv_j \cdot \bn = 0$ for all $1 \leq j \leq k$ (recall that $\bv_j \cdot \bn \leq 0$ because $\bn \in \Normals(\A)$). First, we show that for every $j \leq k$, the vector $-\bv_j$ does not belong to $\cone{}(\{\bv_1,\ldots,\bv_{j-1},\bv_{j+1},\ldots,\bv_k\})$. Assume the converse, i.e., $-\bv_1 \in \cone{}(\{\bv_2,\ldots,\bv_k\})$. Then $-\bv_1 = \sum_{j=2}^k b_j \cdot \bv_j$, where $b_j \in \R^+$ for all $2 \leq j \leq k$. Further, 
	\[
	\bw \quad = \quad (a_1-c) \cdot \bv_1 
	\ + \ (a_2 - c b_2)\cdot \bv_2 \ + \ \cdots \ + \ (a_k -c b_k)\bv_k
	\]
	for every $c > 0$. Clearly, there exists $c>0$ such that at least one of the coefficients $(a_1-c)$, $(a_2 - c b_2),\ldots, (a_k -c b_k)$ is zero and the other remain positive, which contradicts the minimality of~$k$. Since $\{\bv_1,\ldots,\bv_k\} \subseteq \hat{\calH}_\bn$, there must exist $\bn' > \vec{0}$ such that $\{\bv_1,\ldots,\bv_k\} \subseteq \calH_{\bn'}$ (otherwise, we can use the same argument as in the proof of Case~(a) of Lemma~\ref{lem-no-halfspace} to show that \mbox{$-\bv_j \in \cone{}(\{\bv_1,\ldots,\bv_{j-1},\bv_{j+1},\ldots,\bv_k\})$} for some $1 \leq j \leq k$). Since $\bv_j \cdot \bn' < 0$ for all $1 \leq j \leq k$, each $\bv_j$ moves in the direction of $-\bn$ by some fixed positive distance. Since $\Norm(\bw) =1$, there is a bound $\delta_{\bv_1,\ldots,\bv_k} \in \R^+$ such that $a_j \leq \delta_{\bv_1,\ldots,\bv_k}$ for all $1 \leq j \leq k$, because no $a_j\cdot \bv_j$ can go in the direction of $-\bn$ by more than a unit distance. 
	
	The above claim applies to every $\bw \in \cone{}(\Inc)$ where $\bw \cdot \bn = 0$. Since $\Inc$ is finite, there are only finitely many candidates for the set of vectors $\{\bv_1,\ldots,\bv_k\}$ used to express $\bw$, and hence there exists a fixed upper bound $\delta \in \R^+$ for all $\delta_{\bv_1,\ldots,\bv_k}$. This means that, for every $\bw \in \cone{}(\Inc)$ where $\bw \cdot \bn = 0$, there exist $k \in \N$, $a_1,\ldots,a_k \in \R^+$, and $\bv_1,\ldots,\bv_k \in \Inc$ such that $\bw = \sum_{j=1}^k a_j \cdot \bv_j$, $\bv_j \cdot \bn = 0$, and $a_j \leq \delta$ for all $1 \leq j \leq k$. This immediately implies the existence of~$\kappa$.
\end{proof}

Now can formalize the proof of Theorem~\ref{thm-A-decomp}.

 \emph{All cycles of $\A^{\bn}_{C_i}$ are $\bn$-neutral.} \quad First, realize that for every cycle $\eta$ of $\A$ (not necessarily short) we have that $\eff(\eta) \cdot \bn = \sum_{\gamma \in \Decomp(\eta)} \eff(\gamma) \cdot \bn \leq 0$. Now let
 $\beta = p_0,\bu_1,p_1,\bu_2,p_2,\ldots,\bu_n,p_k$ be a cycle of $\A^{\bn}_{C_i}$ (not necessarily short). Then each transition $(p_j,\bu_{j+1},p_{j+1})$ of $\beta$ is contained in some $\bn$-neutral short cycle $\gamma_i$ of $\A$. Let $\varrho_i$ be the (unique) path from $p_{j+1}$ to $p_j$ determined by $\gamma_j$, and let $\varrho = \varrho_{k-1} \odot \cdots \odot \varrho_0$. Then $\eff(\beta) + \eff(\varrho) = \sum_{j=0}^{k-1} \eff(\gamma_j)$. Hence, 
 $\eff(\beta)\cdot \bn + \eff(\varrho) \cdot \bn = \sum_{j=0}^{k-1} \eff(\gamma_j) \cdot \bn = 0$. Thus, we obtain $\eff(\beta)\cdot \bn = - \eff(\varrho) \cdot \bn$. Since both $\beta$ and $\varrho$ are cycles of $\A$, we have that $\eff(\beta)\cdot \bn \leq 0$ and $\eff(\varrho) \cdot \bn \leq 0$, which implies $\eff(\beta)\cdot \bn = 0$.
 
 \emph{Constructing the paths of length $\Theta(f_i(n))$.}\quad Since the termination complexity of $\A^{\bn}_{C_i}$ is $\Theta(f_i(n))$, there is $b \in \R^+$ such that for all sufficiently large $n \in \N$ there exist a configuration $p_n \vec{n}$ and a zero-avoiding computation $\beta_n$ of length at least $b \cdot f_i(n)$ initiated in $p_n \vec{n}$. Since $\pi_{\beta_n}$ inevitably contains a cycle whose length is at least $b' \cdot f_i(n)$ (for some fixed $b' \in \R^+$ independent of $\beta_n$), we can safely assume that $\pi_{\beta_n}$ is actually a cycle, which implies $\eff(\pi_{\beta_n}) \in \cone{}(\Inc)$.
 
 \emph{Constructing the compensating path.}\quad Since $\pi_{\beta_n}$ is $\bn$-neutral and $\eff(\pi_{\beta_n}) \in \cone{}(\Inc)$, we have that $-\eff(\pi_{\beta_n}) \in \cone{}(\Inc)$. This is where we use the defining property of a good normal. Since $-\eff(\pi_{\beta_n}) = \sum_{j=1}^m a_j \cdot \bv_j$, where $m \in \N$, $a_j \in \Q^+$, and $\bv_j \in \Inc$ for all $1 \leq j \leq m$, a straightforward idea is to define the compensating path by ``concatenating'' $\lfloor a_j \rfloor$ copies of $\gamma_j$, where $\eff(\gamma_j) = \bv_j$, for all $1 \leq j \leq m$. This would produce the desired effect on the counters, but there is no bound on the counter decrease in intermediate configurations visited when executing this path. To overcome this problem, we construct the compensating path for $\pi_{\beta_n}$ more carefully. Let $\bw$ be the normalized $\eff(\pi_{\beta_n})$, i.e., $\bw$ has the same direction as $\eff(\pi_{\beta_n})$ but its norm is equal to~$1$. By Lemma~\ref{lem-kappa}, $-\bw$ is expressible as $-\bw = \sum_{j=1}^m a_j \cdot \bv_j$, where $m \in \N$, $a_j \in \Q^+$, and $\bv_j \in \Inc$, so that $\bv_j \cdot \bn = 0$ for all $1\leq j \leq m$, and for all $m' \leq m$, the  absolute values of all components of the vector $\sum_{j=1}^{m'} a_j \cdot \bv_j$ are bounded by~$\kappa$, where $\kappa$ is a constant independent of $\bw$. Let us fix some cycle $\eta$ of $\A^{\bn}_{C_i}$ visiting all of its states (recall that $\A^{\bn}_{C_i}$ is strongly connected). The compensating path for $\pi_{\beta_n}$ is obtained from $\eta$ by inserting $\lfloor \Norm(\eff(\pi_{\beta_n})) \cdot a_j \rfloor$ copies of a short cycle with effect $\bv_j$, for every $1 \leq j \leq m$. Observe that the difference between the effect of this compensating path and $-\eff(\pi_{\beta_n})$ is bounded by a constant vector independent of~$n$. Further, when executing the compensating path, the counters are never decreased by more that $\kappa \cdot \Norm(\eff(\pi_{\beta_n}))$.
 
 \emph{Constructing a zero-avoiding computation $\alpha_n$ of length $\Omega(n\cdot f_i(n))$.}\quad Now we are ready to put the above ingredients together, which still requires some effort. Let us fix a sufficiently large $n \in \N$ and a configuration $p\bv$ where $\size{p\bv} = n$ and $p$ is a control state of $\A^{\bn}_{C_i}$. Let $q$ be the first state of $\pi_{\beta_n}$. If we started $\alpha_n$ in $p\bv$ by executing a finite path which changes the control state from $p$ to the first control state of $\pi_{\beta_n}$ (which takes at most $|Q|$ transitions) and continued by executing $\pi_{\beta_n}$, the counters could potentially reach values arbitrarily close to zero (it might even happen that $\pi_{\beta_n}$ is not executable). Instead, we fix a suitable $n' \leq n$ satisfying $n - n' \geq \kappa \cdot \Norm(\eff(\pi_{\beta_{n'}})) + |Q|$. Since $\Norm(\eff(\pi_{\beta_{n'}})) \leq \sqrt{d} \cdot n'$, we can safely put $n' = (n-|Q|)/(1+\kappa \sqrt{d})$. Now, we can initiate $\alpha_n$ by a short finite path which changes the control state from $p$ to the first control state of $\pi_{\beta_{n'}}$, and continue by executing $\pi_{\beta_{n'}}$. Note that $(1+\kappa \sqrt{d})$ is a constant, so decreasing $n$ to $n'$ has no influence in the asymptotic length of the constructed computation. 
 Then, we can safely execute the compensating path for $\pi_{\beta_{n'}}$, and thus reach a configuration $q\bu$ where we continue in the same way as in $p\bv$, i.e., execute another finite path of length $\Theta(f_i(n))$ and its corresponding compensating path. Since the $\bv-\bu$ is bounded by a constant vector, this can be repeated $\Omega(n)$ times before reaching a configuration where some counter value is not sufficiently large to perform another ``round''. Hence, the length of the resulting $\alpha_n$ is $\Omega(n\cdot f_i(n))$.

\section{Related Work}
In this section we discuss the related work. 

\smallskip\noindent{\em Resource analysis.}
Our work is most closely related to automatic amortized 
analysis~\cite{DBLP:conf/aplas/HoffmannH10,DBLP:conf/esop/HoffmannH10,DBLP:conf/popl/HofmannJ03,DBLP:conf/esop/HofmannJ06,DBLP:conf/csl/HofmannR09,DBLP:conf/fm/JostLHSH09,DBLP:conf/popl/JostHLH10,Hoffman1,DBLP:conf/popl/GimenezM16},
as well as the SPEED project~\cite{SPEED1,SPEED2,DBLP:conf/cav/GulavaniG08}.
All these works focus on worst-case asymptotic bounds for programs,
and present sound methods but not complete methods for upper bounds, 
i.e., even though the asymptotic bound is linear or quadratic, the approaches 
may still fail to provide any upper bound.
However, all these works consider general programs rather than the model of 
VASS. 
In contrast, we consider VASS and present sound and complete method to 
derive tight (upper and matching lower) polynomial complexity bounds.

\smallskip\noindent{\em Recurrence relations.}
Other approaches for bounds analysis involve recurrence relations, such as 
~\cite{DBLP:conf/icfp/Grobauer01,DBLP:journals/tcs/FlajoletSZ91,DBLP:journals/entcs/AlbertAGGPRRZ09,DBLP:conf/sas/AlbertAGP08,DBLP:conf/esop/AlbertAGPZ07}.
Even for relatively simple programs the recurrence are quite complex, and cannot be obtained 
automatically.
In contrast, we present a polynomial-time approach for optimal asymptotic bounds for VASS.

\smallskip\noindent{\em Ranking functions and extensions.}
Ranking functions for intraprocedural analysis have been widely 
studied~\cite{BG05,DBLP:conf/cav/BradleyMS05,DBLP:conf/tacas/ColonS01,DBLP:conf/vmcai/PodelskiR04,DBLP:conf/pods/SohnG91,DBLP:conf/vmcai/Cousot05,DBLP:journals/fcsc/YangZZX10,DBLP:journals/jossac/ShenWYZ13}.
Most works have focussed on linear or polynomial ranking functions~\cite{DBLP:conf/tacas/ColonS01,DBLP:conf/vmcai/PodelskiR04,DBLP:conf/pods/SohnG91,DBLP:conf/vmcai/Cousot05,DBLP:journals/fcsc/YangZZX10,DBLP:journals/jossac/ShenWYZ13},
as well as non-polynomial bounds~\cite{CFG17}.
Again, these approaches are sound, but not complete even to derive upper bounds 
for VASS.
The notion of ranking functions have been also extended to ranking 
supermartingales~\cite{SriramCAV, HolgerPOPL, DBLP:conf/popl/ChatterjeeFNH16, CFG16, CNZ17} 
for expected termination time of probabilistic programs, but such approaches do not present 
polynomial asymptotic bounds.

\smallskip\noindent{\em Results on VASS.} 
The model of VASS~\cite{KM69} or equivalently Petri nets are a fundamental model 
for parallel programs~\cite{EN94,KM69} as well as parameterized systems~\cite{Bloem16}.
The termination problems (counter-termination, control-state termination) as well
as the related problems of boundedness and coverability have been a rich source 
of theoretical problems
that have been widely 
studied~\cite{Lipton:PN-Reachability,Rackoff:Covering-TCS,Esparza:PN,ELMMN14:SMT-coverability,BG11:Vass}.
The complexity of the termination problem with fixed initial configuration is 
\EXPSPACE-complete~\cite{Lipton:PN-Reachability,Yen92:Petri-Net-logic,AH11:Yen}.
Recent work such as~\cite{SZV14,Bloem16} shows how VASS and subclass of VASS (such as lossy VASS)
provide a natural model for abstraction and analysis of programs as well as 
parametrized systems. 
The work of~\cite{SZV14} also considers lexicographic ranking functions to obtain sound 
asymptotic upper bounds for lossy VASS. 
However, this approach is not complete, and also do not consider tight complexity bounds 
(but only upper bounds).
Besides the termination problem, the more general reachability problem where given a VASS, 
an initial and a final configuration, whether there exists a path between them has also been 
studied~\cite{Mayr:PN-reachability,Kosaraju82:VASS-reach-dec,Leroux:VASS-POPL}.
The reachability problem is 
decidable~\cite{Mayr:PN-reachability,Kosaraju82:VASS-reach-dec,Leroux:VASS-POPL},
 and 
\EXPSPACE-hard~\cite{Lipton:PN-Reachability}, 
and the current best-known upper bound is cubic 
Ackermannian~\cite{LS15:VASS-reach-complexity}, a complexity class belonging to 
the third level of a fast-growing complexity hierarchy introduced 
in~\cite{Schmitz16:hyperackermannian-complexity-hierarchy}.

Other related approaches are sized types~\cite{DBLP:journals/lisp/ChinK01,DBLP:conf/icfp/HughesP99,DBLP:conf/popl/HughesPS96}, and polynomial resource bounds~\cite{DBLP:conf/tlca/ShkaravskaKE07}.
Again none of these approaches are complete for VASS nor they can yield tight asymptotic complexity bounds.

\smallskip\noindent{\em Hyperplane-separation technique and existence of infinite computation.}
The problem of existence of infinite computations in VASS has been studied in 
the literature.
Polynomial-time algorithms have been presented in~\cite{CDHR10,VCDHRR15} using results of~\cite{KS88}.
In the more general context of games played on VASS, even deciding the existence of 
infinite computation is coNP-complete~\cite{CDHR10,VCDHRR15}, and various algorithmic approaches 
based on hyperplane-separation technique have been studied~\cite{CV13,JLS15,CJLS17}.
In this work we also consider normals of effects of cycles in VASS, which is related to 
hyperplane-separation technique.
However all previous works consider hyperplane-based techniques to determine the 
existence of infinite computations on games played on VASS, and do not consider
asymptotic time of termination.
In contrast, we present the first approach to show that hyperplane-based techniques 
can be used to derive tight asymptotic complexity bounds on termination time for VASS.

\section{Conclusion}
\label{sec-concl}
In this paper, we studied the problem of obtaining precise polynomial asymptotic bounds for
VASS. We obtained a full end efficient characterization of all VASS with linear termination complexity. Then we considered polynomial termination for strongly connected VASS, dividing them into four disjoint classes (A)--(D). For the first two classes, we proved that the VASS are non-terminating. For VASS in~(C), we obtained a full and effective characterization of termination complexity. For the last class~(D), we have shown that the termination complexity can be exponential even for dimension three. The results are applicable also to general (i.e., non-strongly connected VASS), by analyzing the individual SCCs. Some extra effort is needed in~(C), because here a possible increase in the size of configurations accumulated in a given SCC before moving into another SCC must be taken into account. To keep our proofs reasonably simple, we considered just strongly connected VASS.

Our result gives rise to a number of interesting directions for future work.
First, whether our precise complexity analysis or the complete method can be extended to other
models in program analysis (such as affine programs with loops) is an interesting theoretical
direction to pursue. 
Second, in the practical direction, using our result for developing a scalable tool for sound and 
complete analysis of asymptotic bounds for VASS and their applications in program analysis is 
also an interesting subject for future work.

\bibliographystyle{plain}
\bibliography{tomas,PL,NewCite}

\begin{thebibliography}{10}

\bibitem{DBLP:journals/entcs/AlbertAGGPRRZ09}
Elvira Albert, Puri Arenas, Samir Genaim, Miguel G{\'{o}}mez{-}Zamalloa, German
  Puebla, Diana~V. Ram{\'{\i}}rez{-}Deantes, Guillermo
  Rom{\'{a}}n{-}D{\'{\i}}ez, and Damiano Zanardini.
\newblock Termination and cost analysis with {COSTA} and its user interfaces.
\newblock {\em ENTCS}, 258(1):109--121, 2009.

\bibitem{DBLP:conf/sas/AlbertAGP08}
Elvira Albert, Puri Arenas, Samir Genaim, and Germ{\'{a}}n Puebla.
\newblock Automatic inference of upper bounds for recurrence relations in cost
  analysis.
\newblock In {\em {SAS}}, volume 5079 of {\em {LNCS}}, pages 221--237. Springer, 2008.

\bibitem{DBLP:conf/esop/AlbertAGPZ07}
Elvira Albert, Puri Arenas, Samir Genaim, Germ{\'{a}}n Puebla, and Damiano
  Zanardini.
\newblock Cost analysis of java bytecode.
\newblock In {\em {ESOP}}, volume 4421 of {\em{LNCS}}, pages 157--172. Springer, 2007.

\bibitem{ADFG10:lexicographic-ranking-flowcharts}
Christophe Alias, Alain Darte, Paul Feautrier, and Laure Gonnord.
\newblock Multi-dimensional rankings, program termination, and complexity
  bounds of flowchart programs.
\newblock In {\em SAS}, pages 117--133. Springer-Verlag, 2010.

\bibitem{AH11:Yen}
Mohamed~Faouzi Atig and Peter Habermehl.
\newblock On yen's path logic for petri nets.
\newblock {\em IJFCS}, 22(04):783--799, 2011.

\bibitem{Bloem16}
Roderick Bloem, Swen Jacobs, Ayrat Khalimov, Igor Konnov, Sasha Rubin, Helmut
  Veith, and Josef Widder.
\newblock Decidability in parameterized verification.
\newblock {\em SIGACT News}, 47(2):53--64, 2016.

\bibitem{BG05}
Olivier Bournez and Florent Garnier.
\newblock Proving positive almost-sure termination.
\newblock In {\em RTA}, pages 323--337, 2005.

\bibitem{BG11:Vass}
Laura Bozzelli and Pierre Ganty.
\newblock {\em Complexity Analysis of the Backward Coverability Algorithm for
  VASS}.
\newblock In {\em RP}, pages 96--109. Springer, 2011.

\bibitem{DBLP:conf/cav/BradleyMS05}
Aaron~R. Bradley, Zohar Manna, and Henny~B. Sipma.
\newblock Linear ranking with reachability.
\newblock In {\em {CAV}}, volume 3576 of {\em {LNCS}}, pages 491--504. Springer, 2005.

\bibitem{SriramCAV}
Aleksandar Chakarov and Sriram Sankaranarayanan.
\newblock Probabilistic program analysis with martingales.
\newblock In {\em {CAV}}, volume 8044 of {\em {LNCS}}, pages 511--526. Springer, 2013.

\bibitem{CDHR10}
Krishnendu Chatterjee, Laurent Doyen, Thomas A. Henzinger, and Jean-Francois Raskin.
\newblock Generalized mean-payoff and energy games.
\newblock In {\em FSTTCS}, pages 505--516, 2010.

\bibitem{CV13}
Krishnendu Chatterjee and Yaron Velner.
\newblock Hyperplane separation technique for multidimensional mean-payoff
  games.
\newblock In {\em {CONCUR}}, pages 500--515. LNCS 8052, Springer, 2013.

\bibitem{CFG16}
Krishnendu Chatterjee, Hongfei Fu, and Amir~Kafshdar Goharshady.
\newblock Termination analysis of probabilistic programs through
  positivstellensatz's.
\newblock In {\em CAV (I)}, pages 3--22, 2016.

\bibitem{CFG17}
Krishnendu Chatterjee, Hongfei Fu, and Amir~Kafshdar Goharshady.
\newblock Non-polynomial worst-case analysis of recursive programs.
\newblock In {\em CAV}, pages 41--63, 2017.

\bibitem{DBLP:conf/popl/ChatterjeeFNH16}
Krishnendu Chatterjee, Hongfei Fu, Petr Novotn{\'{y}}, and Rouzbeh
  Hasheminezhad.
\newblock Algorithmic analysis of qualitative and quantitative termination
  problems for affine probabilistic programs.
\newblock In {\em {POPL}}, pages 327--342. {ACM}, 2016.

\bibitem{CNZ17}
Krishnendu Chatterjee, Petr Novotn{\'{y}}, and Dorde Zikelic.
\newblock Stochastic invariants for probabilistic termination.
\newblock In {\em POPL}, pages 145--160, 2017.

\bibitem{DBLP:journals/lisp/ChinK01}
Wei{-}Ngan Chin and Siau{-}Cheng Khoo.
\newblock Calculating sized types.
\newblock In {\em Higher-Order and Symbolic Computation}, 14(2-3):261--300, 2001.

\bibitem{CJLS17}
Thomas Colcombet, Marcin Jurdzinski, Ranko Lazic, and Sylvain Schmitz.
\newblock Perfect half space games.
\newblock In {\em CoRR}, 2017.

\bibitem{DBLP:conf/cav/ColonSS03}
Michael Col{\'{o}}n, Sriram Sankaranarayanan, and Henny Sipma.
\newblock Linear invariant generation using non-linear constraint solving.
\newblock In {\em {CAV}}, volume 2725 of {\em {LNCS}}, pages 420--432. Springer, 2003.

\bibitem{DBLP:conf/tacas/ColonS01}
Michael Col{\'{o}}n and Henny Sipma.
\newblock Synthesis of linear ranking functions.
\newblock In {\em {TACAS}}, volume 2031 of {\em {LNCS}}, pages 67--81. Springer, 2001.

\bibitem{DBLP:conf/vmcai/Cousot05}
Patrick Cousot.
\newblock Proving program invariance and termination by parametric abstraction,
  {L}agrangian relaxation and semidefinite programming.
\newblock In {\em {VMCAI}}, volume 3385 of {\em {LNCS}}, pages 1--24. Springer, 2005.

\bibitem{DKO13:conc-verification-vass}
Emanuele D'Osualdo, Jonathan Kochems, and C.~H.~Luke Ong.
\newblock Automatic Verification of Erlang-Style Concurrency.
\newblock In {\em SAS}, pages 454--476. Springer 2013.

\bibitem{Esparza:PN}
Javier Esparza.
\newblock Decidability and complexity of petri net problems—an introduction.
\newblock {\em Petri nets}, pages 374--428, 1996.

\bibitem{ELMMN14:SMT-coverability}
Javier Esparza, Rusl{\'a}n Ledesma-Garza, Rupak Majumdar, Philipp Meyer, and
  Filip Niksic.
\newblock An SMT-Based Approach to Coverability Analysis.
\newblock In {\em CAV}, pages 603--619, Springer, 2014.

\bibitem{EN94}
Javier Esparza and Mogens Nielsen.
\newblock Decidability issues for petri nets - a survey.
\newblock In {\em Bulletin of the EATCS}, 52:245--262, 1994.

\bibitem{FLLS11:EnGames}
Uli Fahrenberg, Line Juhl, Kim G.~Larsen, Ji{\v{r}}{\'{\i}} Srba.
\newblock {Energy Games in Multiweighted Automata}.
\newblock In {\em {Theoretical Aspects of Computing -- ICTAC 2011}}, volume 6916 of {\em {LNCS}}, pages 95--115. Springer, 2011.

\bibitem{FMWDR17:component-based-synthesis}
Yu~Feng, Ruben Martins, Yuepeng Wang, Isil Dillig, and Thomas~W. Reps.
\newblock Component-based synthesis for complex apis.
\newblock In {\em POPL}, pages 599--612, {ACM}, 2017.

\bibitem{HolgerPOPL}
Luis Mar{\'{\i}}a~Ferrer Fioriti and Holger Hermanns.
\newblock Probabilistic termination: Soundness, completeness, and
  compositionality.
\newblock In {\em {POPL}}, pages 489--501. {ACM}, 2015.

\bibitem{DBLP:journals/tcs/FlajoletSZ91}
Philippe Flajolet, Bruno Salvy, and Paul Zimmermann.
\newblock Automatic average-case analysis of algorithm.
\newblock {\em TCS}, 79(1):37--109, 1991.

\bibitem{GM12:asynchronous-verification-TOPLAS}
Pierre Ganty and Rupak Majumdar.
\newblock Algorithmic verification of asynchronous programs.
\newblock {\em TOPLAS}, 34(1):6:1--6:48, May 2012.

\bibitem{DBLP:conf/popl/GimenezM16}
St{\'{e}}phane Gimenez and Georg Moser.
\newblock The complexity of interaction.
\newblock In {\em {POPL}}, pages 243--255. {ACM}, 2016.

\bibitem{DBLP:conf/icfp/Grobauer01}
Bernd Grobauer.
\newblock Cost recurrences for {DML} programs.
\newblock In {\em {ICFP}}, pages 253--264. {ACM}, 2001.

\bibitem{DBLP:conf/cav/GulavaniG08}
Bhargav~S. Gulavani and Sumit Gulwani.
\newblock A numerical abstract domain based on expression abstraction and max
  operator with application in timing analysis.
\newblock In {\em {CAV}}, volume 5123 of {\em {LNCS}}, pages 370--384. Springer, 2008.

\bibitem{SPEED1}
Sumit Gulwani.
\newblock {SPEED:} symbolic complexity bound analysis.
\newblock In {\em {CAV}}, volume 5643 of {\em {LNCS}}, pages 51--62. Springer, 2009.

\bibitem{SPEED2}
Sumit Gulwani, Krishna~K. Mehra, and Trishul~M. Chilimbi.
\newblock {SPEED:} precise and efficient static estimation of program
  computational complexity.
\newblock In {\em {POPL}}, pages 127--139. {ACM}, 2009.

\bibitem{Hoffman1}
Jan Hoffmann, Klaus Aehlig, and Martin Hofmann.
\newblock Multivariate amortized resource analysis.
\newblock {\em TOPLAS}, 34(3):14, 2012.

\bibitem{Hoffman2}
Jan Hoffmann, Klaus Aehlig, and Martin Hofmann.
\newblock Resource aware {ML}.
\newblock In {\em {CAV}}, volume 7358 of {\em {LNCS}}, pages 781--786. Springer, 2012.

\bibitem{DBLP:conf/aplas/HoffmannH10}
Jan Hoffmann and Martin Hofmann.
\newblock Amortized resource analysis with polymorphic recursion and partial
  big-step operational semantics.
\newblock In {\em {APLAS}}, volume 6461 of {\em {LNCS}}, pages 172--187. Springer, 2010.

\bibitem{DBLP:conf/esop/HoffmannH10}
Jan Hoffmann and Martin Hofmann.
\newblock Amortized resource analysis with polynomial potential.
\newblock In {\em {ESOP}}, volume 6012 of {\em{LNCS}}, pages 287--306. Springer, 2010.

\bibitem{DBLP:conf/popl/HofmannJ03}
Martin Hofmann and Steffen Jost.
\newblock Static prediction of heap space usage for first-order functional
  programs.
\newblock In {\em {POPL}}, pages 185--197. {ACM}, 2003.

\bibitem{DBLP:conf/esop/HofmannJ06}
Martin Hofmann and Steffen Jost.
\newblock Type-based amortised heap-space analysis.
\newblock In {\em {ESOP}}, volume 3924 of {\em {LNCS}}, pages 22--37. Springer, 2006.

\bibitem{DBLP:conf/csl/HofmannR09}
Martin Hofmann and Dulma Rodriguez.
\newblock Efficient type-checking for amortised heap-space analysis.
\newblock In {\em {CSL}}, volume 5771 of {\em {LNCS}}, pages 317--331. Springer, 2009.

\bibitem{DBLP:conf/icfp/HughesP99}
John Hughes and Lars Pareto.
\newblock Recursion and dynamic data-structures in bounded space: Towards
  embedded {ML} programming.
\newblock In {\em {ICFP}}, pages 70--81. {ACM}, 1999.

\bibitem{DBLP:conf/popl/HughesPS96}
John Hughes, Lars Pareto, and Amr Sabry.
\newblock Proving the correctness of reactive systems using sized types.
\newblock In {\em {POPL}}, pages 410--423. {ACM} Press, 1996.

\bibitem{DBLP:conf/popl/JostHLH10}
Steffen Jost, Kevin Hammond, Hans{-}Wolfgang Loidl, and Martin Hofmann.
\newblock Static determination of quantitative resource usage for higher-order
  programs.
\newblock In {\em {POPL}}, pages 223--236. {ACM}, 2010.

\bibitem{DBLP:conf/fm/JostLHSH09}
Steffen Jost, Hans{-}Wolfgang Loidl, Kevin Hammond, Norman Scaife, and Martin
  Hofmann.
\newblock "carbon credits" for resource-bounded computations using amortised
  analysis.
\newblock In {\em {FM}}, volume 5850 of {\em {LNCS}}, pages 354--369. Springer, 2009.

\bibitem{JLS15}
Marcin Jurdzinski, Ranko Lazic, and Sylvain Schmitz.
\newblock Fixed-dimensional energy games are in pseudo-polynomial time.
\newblock In {\em ICALP} pages 260--272, 2015.

\bibitem{KKW10:dynamic-cutoff-detection}
Alexander Kaiser, Daniel Kroening, and Thomas Wahl.
\newblock Dynamic Cutoff Detection in Parameterized Concurrent Programs.
\newblock In {\em CAV}, volume 6174 of {\em {LNCS}}, pages 645--659. Springer, 2010.

\bibitem{KKW12:coverability-proof-minim}
Alexander Kaiser, Daniel Kroening, and Thomas Wahl.
\newblock Efficient Coverability Analysis by Proof Minimization.
\newblock IN {\em CONCUR}, pages 500--515. Springer, 2012.

\bibitem{KM69}
Richard~M. Karp and Raymond~E. Miller.
\newblock Parallel program schemata.
\newblock {\em JCSS}, 3(2):147--195, 1969.

\bibitem{Khachiyan:1979}
Leonid~Genrikhovich Khachiyan.
\newblock A polynomial algorithm in linear programming.
\newblock {\em Doklady Akademii Nauk SSSR}, 244:1093--1096, 1979.

\bibitem{KS88}
S. Rao Kosaraju and Gregory F. Sullivan.
\newblock Detecting cycles in dynamic graphs in polynomial time.
\newblock In {\em STOC}, pages 398--406, 1988.

\bibitem{Kosaraju82:VASS-reach-dec}
S. Rao Kosaraju.
\newblock Decidability of reachability in vector addition systems (preliminary
  version).
\newblock In {\em STOC}, pages 267--281. {ACM}, 1982.

\bibitem{LS15:VASS-reach-complexity}
J{\'e}r\^{o}me Leroux and Sylvain Schmitz.
\newblock Demystifying reachability in vector addition systems.
\newblock In {\em LICS}, pages 56--67. ACM/IEEE 2015.

\bibitem{Leroux:VASS-POPL}
J{\'e}r\^{o}me Leroux.
\newblock Vector addition system reachability problem: A short self-contained
  proof.
\newblock In {\em POPL}, pages 307--316. {ACM}, 2011.

\bibitem{Lipton:PN-Reachability}
Richard J.~Lipton.
\newblock The reachability problem requires exponential space.
\newblock Technical report~62, 1976.

\bibitem{Mayr:PN-reachability}
Ernst W.~Mayr.
\newblock {An Algorithm for the General {P}etri Net Reachability Problem}.
\newblock In {\em SIAM}, 13:441--460, 1984.

\bibitem{MayrMeyer:PN-containment}
Ernst~W.{} Mayr and Albert~R.{} Meyer.
\newblock The complexity of the finite conainment problem for {Petri} nets.
\newblock {\em Journal of the ACM}, pages 561--576, 1981.

\bibitem{DBLP:conf/vmcai/PodelskiR04}
Andreas Podelski and Andrey Rybalchenko.
\newblock A complete method for the synthesis of linear ranking functions.
\newblock In {\em {VMCAI}}, volume 2937 of {\em {LNCS}}, pages 239--251. Springer, 2004.

\bibitem{Puterman:1994}
Martin~L. Puterman.
\newblock {\em Markov Decision Processes}.
\newblock Wiley, 1994.

\bibitem{Rackoff:Covering-TCS}
Charles Rackoff.
\newblock {The Covering and Boundedness Problems for Vector Addition Systems}.
\newblock {\em TCS}, 6:223--231, 1978.

\bibitem{Schmitz16:hyperackermannian-complexity-hierarchy}
Sylvain Schmitz.
\newblock Complexity hierarchies beyond elementary.
\newblock {\em TOTC}, 8(1):3:1--3:36, February 2016.

\bibitem{DBLP:journals/jossac/ShenWYZ13}
Liyong Shen, Min Wu, Zhengfeng Yang, and Zhenbing Zeng.
\newblock Generating exact nonlinear ranking functions by symbolic-numeric
  hybrid method.
\newblock {\em J. Systems Science {\&} Complexity}, 26(2):291--301, 2013.

\bibitem{Shimbel:1951}
Alfonso Shimbel.
\newblock Applications of matrix algebra to communication nets.
\newblock {\em The bull. math. biophysics}, 13(3):165--178, Sep
  1951.

\bibitem{DBLP:conf/tlca/ShkaravskaKE07}
Olha Shkaravska, Ron van Kesteren, and Marko C. J.~D. van Eekelen.
\newblock Polynomial size analysis of first-order functions.
\newblock In {\em {TLCA}}, volume 4583 of {\em {LNCS}}, pages 351--365. Springer, 2007.

\bibitem{SZV14}
Moritz Sinn, Florian Zuleger, and Helmut Veith.
\newblock A simple and scalable static analysis for bound analysis and
  amortized complexity analysis.
\newblock In {\em CAV}, pages 745--761, 2014.

\bibitem{DBLP:conf/pods/SohnG91}
Kirack Sohn and Allen~Van Gelder.
\newblock Termination detection in logic programs using argument sizes.
\newblock In {\em {PODS}}, pages 216--226. {ACM} Press, 1991.

\bibitem{VCDHRR15}
Yaron Velner, Krishnendu Chatterjee, Laurent Doyen, Thomas A. Henzinger, Alexander Rabinovich, and Jean-Francois Raskin.
\newblock The complexity of multi-mean-payoff and multi-energy games.
\newblock {\em Inf. Comput.}, 241:177--196, 2015.

\bibitem{DBLP:journals/tecs/WilhelmEEHTWBFHMMPPSS08}
Reinhard Wilhelm.
\newblock The worst-case execution-time problem - overview of methods and
  survey of tools.
\newblock {\em TECS}, 7(3), 2008.

\bibitem{DBLP:journals/fcsc/YangZZX10}
Lu~Yang, Chaochen Zhou, Naijun Zhan, and Bican Xia.
\newblock Recent advances in program verification through computer algebra.
\newblock {\em FCS}, 4(1):1--16, 2010.

\bibitem{Yen92:Petri-Net-logic}
Hsu-Chun Yen.
\newblock A unified approach for deciding the existence of certain petri net
  paths.
\newblock {\em Information and Computation}, 96(1):119--137, 1992.

\end{thebibliography}

\end{document}